\documentclass[12pt,a4paper,notitlepage]{article} 
\pdfoutput=1
\usepackage[latin1]{inputenc}
\usepackage{amsmath}
\usepackage{amsthm}
\usepackage{amsfonts}
\usepackage{amssymb}
\usepackage{makeidx}
\usepackage{graphicx}
\usepackage{enumerate}
\usepackage{placeins} 


\graphicspath{{graphics/}}

\usepackage[numbers]{natbib}
\usepackage{stackrel}
\usepackage{colonequals} 
\usepackage{url}
        

\usepackage[plainpages=false]{hyperref}
\usepackage{hyperref}
\hypersetup{  colorlinks=true,%
              citecolor=black,%
              filecolor=black,%
              linkcolor=black,%
              urlcolor=black,%
}
%
\urlstyle{rm} 

\DeclareMathOperator{\es}{ES}  
\DeclareMathOperator{\var}{VaR}  
\DeclareMathOperator{\covar}{CoVaR}
\DeclareMathOperator{\ABcovar}{CoVaR^{=}}
\DeclareMathOperator{\ABcovara}{CoVaR_{\alpha}^{=}}
\DeclareMathOperator{\ABcovaraa}{CoVaR_{\alpha,\alpha}^{=}}
\DeclareMathOperator{\ABcovarab}{CoVaR_{\alpha, \beta}^{=}}
\DeclareMathOperator{\covara}{CoVaR_{\alpha}}
\DeclareMathOperator{\covaraa}{CoVaR_{\alpha,\alpha}}
\DeclareMathOperator{\covarab}{CoVaR_{\alpha, \beta}}
\DeclareMathOperator{\coes}{CoES}
\DeclareMathOperator{\coesa}{CoES_{\alpha}}

\DeclareMathOperator{\coesab}{CoES_{\alpha, \beta}}
\DeclareMathOperator{\ABcoesa}{CoES_{\alpha}^{=}}

\DeclareMathOperator{\ABcoesab}{CoES_{\alpha, \beta}^{=}}
\DeclareMathOperator{\ABcoes}{CoES^{=}}
\DeclareMathOperator{\dcovar}{\Delta CoVaR^{=}}
\DeclareMathOperator{\dcovarab}{\Delta CoVaR^{=}_{\alpha, \beta}}
\DeclareMathOperator{\dcovara}{\Delta CoVaR^{=}_{\alpha}}
\DeclareMathOperator{\dmcovar}{\Delta^{\mathrm{med}} CoVaR^{=}}
\DeclareMathOperator{\dmcovarab}{\Delta^{\mathrm{med}} CoVaR^{=}_{\alpha, \beta}}
\DeclareMathOperator{\dmcovara}{\Delta^{\mathrm{med}} CoVaR^{=}_{\alpha}}

\DeclareMathOperator{\med}{med}

\DeclareMathOperator{\corr}{corr}
\DeclareMathOperator{\variance}{var}
\DeclareMathOperator{\MES}{MES}

\newcommand{\cubr}[1]{\{ #1 \}}
\newcommand{\disteq}{\stackrel{\mathrm{d}}{=}}
\newcommand{\eps}{\varepsilon}
\newcommand{\E}{\mathrm{E}}
\newcommand{\Ecal}{\mathcal{E}}
\newcommand{\enquote}[1]{\lq\lq{}{#1}\rq\rq{}}
\newcommand{\FX}{F_X}
\newcommand{\fX}{f_X}
\newcommand{\FXY}{F_{X,Y}}
\newcommand{\fXY}{f_{X,Y}}
\newcommand{\FY}{F_Y}
\newcommand{\ginv}{^\leftarrow}
\newcommand{\inv}{^{-1}}
\newcommand{\Law}{\mathcal{L}}
\newcommand{\muX}{\mu_X}
\newcommand{\muY}{\mu_Y}
\newcommand{\mylefteqn}{\hspace{2em}&\hspace{-2em}}

\newcommand{\Ncal}{\mathcal{N}}
\newcommand{\order}{\mathrel{\preceq}}
\newcommand{\ordersm}{\mathrel{\preceq}_{\mathrm{sm}}}
\newcommand{\R}{\mathbb{R}}

\newcommand{\robrfl}[1]{\left({#1}\right)}
\newcommand{\sigmaX}{\sigma_X}

\newcommand{\sigmaY}{\sigma_Y}
\newcommand{\Xtilde}{\widetilde{X}}
\newcommand{\Ytilde}{\widetilde{Y}}
\newcommand{\tr}{^{\top}}
\newcommand{\unif}{\mathrm{unif}}

\renewcommand{\d}{\text{d}}
\renewcommand{\P}{\mathrm{P}}
\theoremstyle{plain}
\newtheorem{theorem}{Theorem}[section]
\newtheorem{corollary}[theorem]{Corollary} 
\newtheorem{definition}[theorem]{Definition} 
\theoremstyle{remark}
\newtheorem{remark}[theorem]{Remark} 


\newcommand{\draftonly}[1]{\relax}

\newcommand{\footnoteremember}[2]{\footnote{#2}\newcounter{#1}\setcounter{#1}{\value{footnote}}}
\newcommand{\footnoterecall}[1]{\footnotemark[\value{#1}]}

\author{
Georg Mainik%
\footnoteremember{foot:RiskLab}{RiskLab, Department of Mathematics, ETH Zürich, Rämistrasse 101, 8092 Zürich, Switzerland} 
\footnote{\texttt{georg.mainik@math.ethz.ch}}
\and
Eric Schaanning%
\footnoterecall{foot:RiskLab}
\footnote{\texttt{eric.schaanning@math.ethz.ch}}
}
\title{On dependence consistency of $\covar$ and some other systemic risk measures}

\begin{document}
\bibliographystyle{abbrvnat} 

\maketitle

\draftonly{
\tableofcontents
\listoffigures
}

\begin{abstract}
This paper is dedicated to the consistency of systemic risk measures 
with respect to stochastic dependence.  
It compares two alternative notions of Conditional Value-at-Risk 
($\covar$) available in the current literature. 
These notions are both based on the conditional distribution 
of a random variable $Y$ given a stress event for a random variable $X$,
but they use different types of stress events.
%
%
We derive representations of these alternative $\covar$ notions in 
terms of copulas, study their general dependence consistency and compare 
their performance in several stochastic models. 
Our central finding is that conditioning on $X\ge\var_\alpha(X)$ 
gives a much better response to dependence between $X$ and $Y$ 
than conditioning on $X=\var_\alpha(X)$. 
We prove general results that relate the dependence consistency of $\covar$ 
using conditioning on $X\ge\var_\alpha(X)$ to well established results on 
concordance ordering of multivariate distributions or their copulas. 
These results also apply to some other systemic risk measures, such 
as the Marginal Expected Shortfall ($\mathrm{MES}$) and the Systemic Impact Index ($\mathrm{SII}$). 
We provide counterexamples showing that $\covar$ based on the stress event 
$X = \var_\alpha(X)$ is not dependence consistent. 
In particular, if $(X,Y)$ is bivariate normal, then $\covar$ based on $X=\var_\alpha(X)$ is not an increasing function of the correlation parameter. 
Similar issues arise in the bivariate $t$ model and in the model with 
$t$ margins and a Gumbel copula. 
In all these cases, $\covar$ based on $X\ge\var_\alpha(X)$ is an increasing 
function of the dependence parameter. 
%
%
%
%
%

\end{abstract}


\section{Introduction}
The present paper studies the notion of Conditional  Value-at-Risk ($\covar$) 
introduced by Adrian and Brunnermeier \cite{covar08} 
as a dependence adjusted version of 
Value-at-Risk ($\var$). 
The general idea behind $\covar$ is to use the conditional distribution 
of a random variable $Y$ representing a particular financial institution 
(or the entire financial system) given that another institution, represented by 
a random variable $X$, is in stress.
$\covar$ represents one of the major threads in the current regulatory 
and scientific discussion of systemic risk, which significantly 
intensified after the recent financial crisis. 
The current discussion on systemic risk measurement is far from being 
concluded, and the competing methodologies  
are still under development. 
In addition to systemic risk measures 
\cite[cf.][]{covar08,covar11,Girardi/Erguen:2011,Goodhart/Segoviano:2008,Acharya/Pedersen/Philippon/Richardson:2010,Zhou:2010,Huang/Zhou/Zhu:2011},  
related topics include the structure of interbank networks, e.g., 
\cite{Boss/Elsinger/Summer/Thurner:2004,Cont/Moussa/Santos:2012}, 
models explaining how systemic risk is created, e.g., 
\cite{Choi/Douady:2012,Ibragimov/Walden:2007}, 
and attribution of systemic risk charges within a financial system, 
as discussed in \cite{Tarashev/Borio/Tsatsaronis:2010,Staum:2012}.
\par
Our contribution addresses  
the consistency of systemic risk measures 
with respect to the dependence in the underlying stochastic model. 
In the case of $\covar$ we give a strong indication for the choice of 
the stress event for the conditioning random variable $X$.
%
%
%
There are two alternative definitions of $\covar$ in the current 
literature. 
The original definition in \cite{covar08,covar09,covar11} is 
derived from the conditional distribution of $Y$ given that  
$X=\var_\alpha(X)$. 
The second one uses conditioning on $X\ge\var_\alpha(X)$.
This modification 
was proposed by Girardi and Ergün \cite{Girardi/Erguen:2011} to 
improve the compatibility of $\covar$ with non-parametric estimation methods. 
For similar reasons, such as continuity and better compatibility with discrete 
distributions, conditioning on $X\ge\var_\alpha(X)$ was also favoured 
by Klyman \cite{Klyman:2011} 
for both $\covar$ and Conditional Expected Shortfall ($\coes$). 
Finally, it is remarkable that most competitors of $\covar$ 
\cite[cf.][]{Goodhart/Segoviano:2008,Acharya/Pedersen/Philippon/Richardson:2010,Zhou:2010,Huang/Zhou/Zhu:2011} use conditioning on 
$X\ge\var_\alpha(X)$ as well. This approach goes in line with 
the general concept of stress scenarios discussed in~\cite{Balkema/Embrechts:2007}.
\par
Our results show that conditioning on $X\ge\var_\alpha(X)$ 
has great advantages for dependence modelling.
We prove that this modification of $\covar$ 
makes it response consistently to dependence parameters in 
many important stochastic models, 
whereas the original definition of $\covar$ fails to do so. 
The counterexamples even include the bivariate Gaussian model, 
where the original $\covar$ is decreasing with respect to 
the correlation $\rho:=\corr(X,Y)$ for $\rho>1/\sqrt{2}$. 
Thus, $\covar$ based on $\cubr{X=\var_\alpha(X)}$ fails to detect 
systemic risk when it is most pronounced; and we also found this kind of 
inconsistency in other examples. 
On the other hand, our findings for the modified $\covar$ 
relate its dependence consistency to 
concordance ordering of multivariate distributions or related copulas. 
This may explain the comparative results in \cite{Gauthier/Lehar/Souissi:2012}, 
where $\ABcovar$ stood somewhat apart from its competitors.
Moreover, it gives the the modified notion of $\covar$ 
a solid mathematical basis. 
\par
Besides $\covar$, we also discuss extensions to Conditional Expected Shortfall 
($\coes$). It turns out 
that the dependence inconsistency or dependence consistency 
of the alternative $\covar$ notions 
is propagated to the corresponding definitions of $\coes$.
The dependence consistency results for 
$\covar$ and $\coes$ based on the stress scenario $X\ge\var_\alpha(X)$ 
also apply to the Marginal Expected Shortfall ($\mathrm{\MES}$)
defined in~\cite{Acharya/Pedersen/Philippon/Richardson:2010} 
and to the Systemic Impact Index ($\mathrm{SII}$) introduced in~\cite{Zhou:2010}.  
\par
The paper is organized as follows. 
Basic notation and alternative definitions of $\covar$ and $\coes$  
are given in Section~\ref{sec:basics}.  
In Section~\ref{sec:general} we present the general mathematical 
results, including representations of $\covar$ in terms of copulas 
and consistency of the modified $\covar$ or $\coes$ with respect to 
dependence characteristics. 
Section~\ref{sec:examples} contains a detailed comparison of the 
original and the modified $\covar$ in three different models: 
the bivariate normal, the bivariate $t$ distribution, and 
a bivariate distribution with $t$ margins and a Gumbel copula. 
Conclusions are stated in Section~\ref{sec:conclusions}. 
%
%
%

\section{Basic definitions and properties} \label{sec:basics}
Let $X$ and $Y$ be random variables representing the profits and losses of 
two financial institutions, such as banks. 
Focusing on risks, let $X$ and 
$Y$ be \emph{random loss} variables, so that positive values of $X$ and $Y$ 
represent losses, whereas the gains are represented by negative values. 
\par
The issues of contagion and systemic stability raise questions for the 
joint probability distribution of $X$ and $Y$:
\[
F_{X,Y}(x,y) := \P (X \le x, Y \le y). 
\]
The corresponding marginal distributions will be denoted by $\FX$ and $\FY$.
Provided a method to quantify the loss or gain of the entire financial system,
$F_{X,Y}$ can also represent the joint loss distribution of a bank $X$ and the  
system $Y$. 
\par
In the current banking regulation framework 
(Basel II and the so-called Basel 2.5),  
the calculation of risk capital is based on measuring risk of each 
institution separately, with \emph{Value-at-Risk} ($\var$) as a risk measure.
The Value-at-Risk of a random loss $X$ at the confidence 
level $\alpha\in(0,1)$ is the $\alpha$-quantile of the loss 
distribution $\FX$ \citep[cf.][Definition 2.10]{qrm}. That is, 
\[
\var_\alpha(X) = \FX\ginv(\alpha) 
\] 
where $\FX\ginv(y):= \inf\{ x \in \R : \FX(x) \ge y\}$ 
is the \emph{generalized inverse} of $\FX$. 
The most common values of $\alpha$ are $0.95$ and $0.99$. 
\par
For continuous and strictly increasing $\FX$ the generalized inverse 
$\FX\ginv$ coincides with the inverse function $\FX\inv$ of $\FX$. 
In this case one has $\var_\alpha(X) = \FX\inv(\alpha)$ for $\alpha\in(0,1)$. 
For a thorough discussion of generalized inverse functions we refer to 
\cite{inverses}.
\par
In the present paper we discuss two alternative approaches to adjust 
$\var$ to dependence between $X$ and $Y$. 
This is achieved by conditioning the distribution of $Y$ on a stress 
scenario for $X$. 
These two notions appear in the recent literature under the name 
\emph{Conditional Value-at-Risk} ($\covar$), 
but they use different kinds of stress scenarios. 
The original notion was introduced by Adrian and Brunnermeier 
\cite{covar08,covar09,covar11} and will henceforth be denoted by
$\ABcovar$.
The alternative definition was proposed by Girardi and Ergün  \cite{Girardi/Erguen:2011}. We denote it by $\covar$. 
\begin{definition}\label{def:1}
\begin{align*}
\ABcovarab (Y|X) 
&:= 
\var_\beta(Y|X=\var_\alpha(X));\\ 
\covarab(Y|X) 
&:= 
\var_\beta(Y|X\ge\var_\alpha(X)). 
\end{align*}
\end{definition}
\par
The computation of $\ABcovar$ requires 
the knowledge of $F_{Y|X=\var_\alpha(X)}$. 
If $\FXY$ has a density $\fXY$, then $\fX(x)=\int_{-\infty}^\infty \fXY(x,y) dy$ is a density of $\FX$, and 
\[
F_{Y|X=\var_\alpha(X)}(y) = \frac{\int_{-\infty}^y \fXY(\var_\alpha(X), t) \,\d t}{\fX(\var_\alpha(X))},
\]
provided that $\fX(\var_\alpha(X)) > 0$. 
In some models, such as elliptical distributions, $F_{Y|X=\var_\alpha(X)}$ is 
known explicitly. In general, however, computation of $F_{Y|X=\var_\alpha(X)}$ 
requires numerical integration. 
\par
Conditioning on $X\ge \var_\alpha(X)$ is less technical.
The definition of $\var_\alpha(X)$ implies that 
$\P(X\ge\var_\alpha(X)) \ge 1-\alpha$, 
so that elementary conditional probabilities are well defined. 
In particular, if $\FX$ is continuous, then 
\[
F_{Y|X\ge\var_\alpha(X)} (y) = \frac{\P(Y\le y, X\ge\var_\alpha(X))}{1-\alpha}. 
\]
Moreover, conditioning on events with positive probabilities is 
advantageous in statistical applications, including model fitting and 
backtesting.
This is the major reason why 
the original notion of $\ABcovar$ was modified to $\covar$ 
in~\cite{Girardi/Erguen:2011}. 
\par
A straightforward extension from $\covar$ to \emph{Conditional Expected Shortfall} ($\coes$) is based on the representation
$\es_\beta(Y)=\frac{1}{1-\beta}\int_\beta^1 \var_t(Y) dt$. 
\begin{definition}\label{def:2}
\begin{align}
\coesab (Y|X) 
&\colonequals  \label{eq:019}
\frac{1}{1 - \beta} \int_{\beta}^{1} \covar_{\alpha, t} (Y|X) \d t,\\
\ABcoesab (Y|X) 
&\colonequals \label{eq:019a}
\frac{1}{1 - \beta} \int_{\beta}^{1} \ABcovar_{\alpha, t} (Y|X) \d t.
\end{align}
\end{definition}
\begin{remark} \label{rem:3}
\begin{enumerate}[(a)]
\item
In precise mathematical terms, $\ABcovarab$ and $\covarab$ 
are the $\beta$-quantiles of the conditional distributions
$F_{Y|X=\var_\alpha(X)}$ and $F_{Y|X\ge\var_\alpha(X)}$:
\begin{align*}
\ABcovarab (Y|X) 
&= 
F_{Y|X=\var_\alpha(X)}\ginv(\beta);\\
\covarab (Y|X) 
&= 
F_{Y|X\ge\var_\alpha(X)}\ginv(\beta).
\end{align*}
\item \label{item:rem.3.b}
In \cite{covar08,covar11,ambro,Girardi/Erguen:2011}, 
the authors work with a common confidence level for $X$ and $Y$, 
i.e., in the special case $\alpha=\beta$. 
Similarly to the notation used there, 
we will omit $\beta$ if $\beta=\alpha$ and 
write $\covar_{\alpha}$ instead of $\covar_{\alpha,\alpha}$ if 
it does not lead to confusion.
However, the 
definition of $\coes$ needs separate confidence levels for $X$ and $Y$ 
in the integrand $\covar_{\alpha,t}(Y|X)$.  
\item
Since $\coesab (Y|X)= \es_\beta(Z)$ for 
a random variable $Z \sim F_{Y | X \geq \var_{\alpha} (X) }$, 
the \emph{coherence} of $\es$ in the sense of~\cite{Artzner/Delbaen/Eber/Heath:1999} is inherited by $\coesab$ for all $\alpha,\beta\in(0,1)$.
The central point here is subadditivity, which is understood as 
\[
\coesab(Y + Y'|X) \le \coesab(Y|X) + \coesab(Y'|X)
\]
for any random variables $(Y,Y',X)$ defined on the same probability space. 
%
%
\item
In \cite{covar08,covar11}, $\coes$ is defined as $\E[Y|Y\ge\ABcovaraa(Y|X)]$. 
Note that this definition replaces the stress scenario 
$\cubr{X=\var_\alpha(X)}$ by $\cubr{Y\ge\ABcovaraa(Y|X)}$, 
which is not related to $X$ directly. Compared to $\ABcoesab(Y|X)$, 
this definition is quite unnatural. Moreover, it does not guarantee 
coherence, which is the central property of Expected Shortfall.
\item \label{item:rem.3.e}
The notion of \emph{Marginal Expected Shortfall} ($\MES$) introduced in \cite{Acharya/Pedersen/Philippon/Richardson:2010} is closely related to $\covar$ and $\coes$. It is defined as 
\[
\MES_\alpha(Y|X) := \E [Y|X\ge\var_\alpha(X)]
\]
where $X:=\sum_{i=1}^dY_i$ is the financial system and $Y:=Y_i$ for some $i$ is 
an institution. 
The idea behind $\MES$ is to quantify the insurance premia corresponding 
to bail-outs which become necessary when the entire financial system is 
close to a collapse. 
The major economic difference between 
$\MES$ and $\covar$ is the role of $X$ and $Y$. 
With $\MES$, 
the conditioning random variable $X$ is the system, 
and the target random variable $Y$ is a part of the system. 
In the original work on $\covar$,  
$Y$ is the system, and $X$ is a part of it. 
\par
On the mathematical level, $\MES$ and $\covar$ or $\coes$ are quite close 
to each other. It is easy to see that 
\begin{align*}
\MES_\alpha(Y|X) 
&= 
\int_0^1 F_{Y|X\ge \var_\alpha (X)} \ginv(t) d t 
=
\int_0^1 \covar_{\alpha,t}(Y|X) d t.
\end{align*}
In view of~\eqref{eq:019}, one could 
also write $\MES_\alpha(Y|X)=\coes_{\alpha,0}(Y|X)$. 
\item 
In \cite{Klyman:2011}, $\covarab$ and $\coesab$ in the sense of 
Definitions~\ref{def:1} and~\ref{def:2} are called $\mathrm{DistVaR}$ 
and $\mathrm{DistES}$. Besides the different naming, the definitions 
are essentially the same, and these notions are also compared to 
$\ABcovar$ and $\ABcoes$. However, the comparison in~\cite{Klyman:2011} is 
concentrated on general representations, 
compatibility with discrete, e.g., empirical, distributions, 
and the behaviour in the bivariate Black-Scholes model. 
As far as we are aware, a study of consistency with respect to dependence 
parameters has been missing so far.
\end{enumerate}
\end{remark}
\par
The introduction of $\ABcovar$ in \cite{covar08} aims not at 
$\ABcovar$ itself, but at the contribution of a particular financial 
institution to the systemic risk. 
In~\cite{covar08}, $\ABcovar$ is used to construct 
a \emph{risk contribution measure} that should quantify 
how a stress situation for an institution $X$ affects the system (or another 
institution) $Y$. 
In~\cite{covar08}, the authors propose $\frac{\ABcovarab(Y|X)}{\var_\beta(Y)}-1$ 
as a systemic risk indicator. 
In \cite{covar09}, the systemic risk measure is modified to 
\begin{equation} \label{eq:021} 
\dcovarab(Y|X) := \ABcovarab (Y) - \var_\beta(Y).
\end{equation}
In \cite{covar11}, the centring term $\var_\beta(Y)$ representing 
the risk of $Y$ in an unstressed state is replaced by the conditional 
$\var$ of $Y$ given that $X$ is equal to its median:
\begin{equation} \label{eq:022}
\dmcovarab(Y|X) := \ABcovarab (Y|X) - \var_\beta(Y|X=\med(X)) 
\end{equation}
to remedy some inconsistencies observed in a comparison of $\ABcovar$ 
across different models. 
\par
Unfortunately, the centring in~\eqref{eq:021} is not the only reason why 
$\dcovar$ can give a biased view of dependence between $X$ 
and $Y$. 
The results presented below demonstrate that there is a more fundamental 
issue that cannot be solved by modifying $\dcovar$ to $\dmcovar$ 
or taking any other centring term.
The primary deficiency of $\dcovar$ is that the 
underlying stress scenario $X=\var_\alpha(X)$ 
is too selective and over-optimistic. 
If, for instance, $\FX$ is continuous, then $\P(X=\var_\alpha(X)) = 0$, so that 
this particular event actually never occurs. 
Generally speaking, the ability of $\ABcovar$, $\dcovar$, or $\dmcovar$ 
to describe the influence of 
$X$ on $Y$ strongly depends on how well 
$F_{Y|X=\var_\alpha(X)}$ approximates $F_{Y|X=x}$ for $x\ge\var_\alpha(X)$. 
As shown in Section~\ref{sec:examples}, this approximation fails even in 
very basic models, and it typically underestimates the contagion from  $X$ to $Y$. 
%
%
%
%
%
 
\section{General results} \label{sec:general}
We begin with representations of $\ABcovar$ and $\covar$ 
in terms of copulas. 
It is well known that any bivariate distribution function $F_{X,Y}$ 
admits the decomposition
\begin{align} \label{eq:008} 
F_{X,Y}(x,y) = C(\FX(x), \FY(y))  
\end{align}
where $C$ is a probability distribution function on $(0,1)^2$ with uniform 
margins (cf. \cite{Sklar:1959,Joe:1997}).   
That is, there exist random variables $U,V\sim\unif(0,1)$ such that 
$C(u,v) = \P(U\le u, V\le v)$. 
The function $C$ is called a \emph{copula} of $F_{X,Y}$. 
If both $\FX$ and $\FY$ are continuous, then $C$ is uniquely 
determined by 
$C(u,v) = F_{X,Y}(\FX\ginv(u), \FY\ginv(v))$. 
\par
The decomposition~\eqref{eq:008} yields the following representation of 
$\ABcovar$ and $\covar$.
\begin{theorem}\label{thm:1}
Let $(U,V)\sim C$ where $C$ is a copula of $\FXY$. 
If $\FX$ is continuous, then
\begin{enumerate}[(a)]
\item \label{item:thm.1.a}
$\ABcovarab (Y|X) = \FY\ginv\robrfl{F_{V|U=\alpha}\ginv(\beta)}$, 
\item \label{item:thm.1.b}
$\covarab (Y|X) = \FY\ginv\robrfl{F_{V|U\ge\alpha}\ginv(\beta)}$, 
and $F_{V|U\ge\alpha}(v) = \frac{v-C(\alpha,v)}{1-\alpha}$.
\end{enumerate}
\end{theorem}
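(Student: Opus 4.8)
The plan is to realize $(X,Y)$ on the same probability space as $(U,V)$ via the quantile transforms $X=\FX\ginv(U)$ and $Y=\FY\ginv(V)$. Since $\FX\ginv(U)\sim\FX$ and $\FY\ginv(V)\sim\FY$ whenever $U,V\sim\unif(0,1)$, and since $(\FX\ginv(U),\FY\ginv(V))$ has copula $C$, the Sklar decomposition~\eqref{eq:008} guarantees $(\FX\ginv(U),\FY\ginv(V))\sim\FXY$; it is therefore no loss of generality to compute both $\covar$ notions under this representation. The one analytic ingredient I would isolate at the outset is the commutation of quantiles with monotone left-continuous maps: for any random variable $Z$ and any non-decreasing, left-continuous $g$ one has $F_{g(Z)}\ginv(\beta)=g\robrfl{F_Z\ginv(\beta)}$, a standard property of generalized inverses (cf.~\cite{inverses}). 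As $\FY\ginv$ is non-decreasing and left-continuous, this lemma with $g=\FY\ginv$ is exactly what produces the outer $\FY\ginv(\cdot)$ in both~\eqref{item:thm.1.a} and~\eqref{item:thm.1.b}; all that then remains in each case is to identify the conditional law of the inner variable $V$.

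For part~\eqref{item:thm.1.b} I would first show $\cubr{X\ge\var_\alpha(X)}=\cubr{U\ge\alpha}$ up to a null set. The inclusion $\cubr{U\ge\alpha}\subseteq\cubr{X\ge\var_\alpha(X)}$ is immediate from monotonicity of $\FX\ginv$ together with $\var_\alpha(X)=\FX\ginv(\alpha)$. Continuity of $\FX$ gives $\FX(\FX\ginv(\alpha))=\alpha$ and excludes atoms, so $\P(X\ge\var_\alpha(X))=1-\alpha=\P(U\ge\alpha)$, and the two events coincide almost surely. Conditioning on this common event, the conditional law of $Y=\FY\ginv(V)$ is the image under $\FY\ginv$ of the conditional law of $V$, so the transformation lemma yields $\covarab(Y|X)=\FY\ginv\robrfl{F_{V|U\ge\alpha}\ginv(\beta)}$. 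The explicit formula for $F_{V|U\ge\alpha}$ then comes from a direct computation: $\P(V\le v,\,U\ge\alpha)=\P(V\le v)-\P(U\le\alpha,\,V\le v)=v-C(\alpha,v)$, using $V\sim\unif(0,1)$ and continuity to absorb the boundary $\cubr{U=\alpha}$, after which one divides by $\P(U\ge\alpha)=1-\alpha$.

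Part~\eqref{item:thm.1.a} follows the same template but with the degenerate conditioning event $\cubr{X=\var_\alpha(X)}$, and this is where the genuine difficulty lies. Under continuity of $\FX$ this event has probability zero, so the identification with $\cubr{U=\alpha}$ cannot be read off from matching probabilities as in~\eqref{item:thm.1.b}; it must be understood at the level of regular conditional distributions, or, as in the density display preceding the theorem, through $\fXY$ and $\fX$. The substance of the claim is that $F_{Y|X=\var_\alpha(X)}$ coincides with the law of $\FY\ginv(V)$ under the conditional law of $V$ given $U=\alpha$; granting this, the transformation lemma and Remark~\ref{rem:3}(a) again deliver $\ABcovarab(Y|X)=\FY\ginv\robrfl{F_{V|U=\alpha}\ginv(\beta)}$.

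I expect this disintegration to be the main obstacle. The cleanest route is to assume $\FX$ strictly increasing, so that $U=\FX(X)$ and $X=\FX\ginv(U)$ generate the same $\sigma$-algebra and the conditioning events literally agree; in the general continuous case I would instead argue by approximation, replacing $\cubr{X=\var_\alpha(X)}$ by $\cubr{X\in[\var_\alpha(X),\var_\alpha(X)+\eps]}$ and $\cubr{U=\alpha}$ by $\cubr{U\in[\alpha,\alpha+\delta]}$ and passing to the limit, or else adopt the density-based definition of $F_{Y|X=\var_\alpha(X)}$ and verify that the change of variables $x=\FX\ginv(u)$ transfers the conditional density of $Y$ given $X=\var_\alpha(X)$ to that of $\FY\ginv(V)$ given $U=\alpha$. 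In all of these the elementary part~\eqref{item:thm.1.b} remains untouched; only the point-mass conditioning in~\eqref{item:thm.1.a} requires this extra care.
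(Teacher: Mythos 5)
Your proposal is correct and follows essentially the same route as the paper: realize $(X,Y)$ as $(\FX\ginv(U),\FY\ginv(V))$, translate the conditioning event into one for $U$, reduce to the conditional law of $V$, and apply the chain rule for generalized inverses, with the identical direct computation $\P(V\le v,\,U\ge\alpha)=v-C(\alpha,v)$. The one point where you diverge is that you treat part~(a) for general continuous (not strictly increasing) $\FX$ as needing an approximation argument or a density-based disintegration; the paper sidesteps this entirely by observing that continuity of $\FX$ already forces $\FX\ginv$ to be strictly increasing (a flat piece of $\FX\ginv$ would correspond to a jump of $\FX$), so $\cubr{\FX\ginv(U)=\FX\ginv(\alpha)}=\cubr{U=\alpha}$ exactly, $\sigma(X)=\sigma(U)$, and the two point-mass conditionings are the same object rather than ones to be matched in a limit.
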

\begin{proof}
Part~(\ref{item:thm.1.a}).
It is well known that $(\FY\ginv(U),\FX\ginv(V))\sim \FXY$, and hence  
\[
F_{Y | X = \var_\alpha(X)} (y)
=
\P(\FY\ginv(V) \le y | \FX\ginv(U) = \FX\ginv(\alpha)).  
\]
The functions $\FY$ and $\FY\ginv$ are non-decreasing and satisfy 
$v\le\FY(\FY\ginv(v))$ and $\FY\ginv(\FY(y)) \le y$ for all $v\in(0,1)$ and $y\in\R$. 
This implies that $\FY\ginv(V)\le y$ is equivalent to $V\le \FY(y)$.  
Moreover, continuity of $\FX$ implies that $\FX\ginv$ is strictly increasing, 
so that $\FX\ginv(U) = \FX\ginv(\alpha)$ 
is equivalent to $U=\alpha$. This yields
\[
F_{Y | X = \var_\alpha(X)} (y)
=
\P(V\le \FY(y) | U=\alpha)
=
F_{V|U=\alpha}(\FY(y)), 
\]
and the result follows from the chain rule for the generalized inverse.
\par
Part(\ref{item:thm.1.b}).
Analogously to Part~(\ref{item:thm.1.a}), one obtains that 
\[
F_{Y | X \ge \var_\alpha(X)} (y)
=
\P(V\le \FY(y) | U\ge\alpha)
=
F_{V|U\ge\alpha}(\FY(y)),
\]
and hence $\covarab(Y|X) = \FY\ginv(F_{V|U\ge \alpha}(\beta))$. 
Since $(U,V)\sim C$ and the margins of $C$ are uniform, we obtain that 
\begin{align*}
F_{V|U\ge \alpha}(v) 
&= \frac{P(V\le v, U\ge\alpha)}{P(U\ge\alpha)} =  
\frac{v- C(\alpha,v)}{1-\alpha}. 
\end{align*}
\end{proof}
Theorem~\ref{thm:1}(\ref{item:thm.1.a}) provides a link between the 
ordering of $\covar$ and the notion of \emph{concordance ordering}. 
\begin{definition}
\citep[cf.][Definition 3.8.1]{Mueller/Stoyan:2002}
Let $(X,Y)$ and $(X',Y')$ be bivariate random vectors with $\FX=F_{X'}$ and 
$\FY = F_{Y'}$.
Then $(X,Y)$ is smaller than $(X',Y')$ in concordance order ($(X,Y)\order(X',Y')$ or, equivalently, $\FXY \order F_{X',Y'}$) if 
\[
\forall x,y\in\R
\quad
\P(X\le x,Y\le y) \le \P(X'\le x,Y'\le y). 
\]
\end{definition}
\begin{remark} \label{rem:2}
The following equivalent characterizations of $(X,Y)\order(X',Y')$ 
will be used in in the sequel: 
\begin{enumerate}[(a)]
\item 
$P(X>x, Y>y) \le P(X'>x, Y'>y)$ for all $x,y\in\R$; 
\item 
$C \order C'$ for the copulas of $\FXY$ and $F_{X',Y'}$ if the 
margins are continuous;
\item \label{item:rem.2.b}
$\E f(X,Y) \le \E f(X',Y')$ for all supermodular functions $f:\R^2\to\R$, 
i.e., for all $f$ satisfying 
\[
f(x+\eps,y+\delta) + f(x,y) \ge f(x+\eps ,y) + f(x, y+\delta) 
\]
for all $x,y\in\R$ and all $\eps,\delta>0$. This order relation is 
called \emph{supermodular ordering ($\ordersm$)}. 
\end{enumerate}
For proofs and further alternative characterizations we refer to 
\cite[Theorem 3.8.2]{Mueller/Stoyan:2002}.
\end{remark}
\par
The central theoretical result of the present paper is the following. 
\begin{theorem} \label{thm:3}
Let $(X,Y)$ and $(X',Y')$ be bivariate random vectors with copulas $C$ and $C'$, respectively, and assume that $\FY = F_{Y'}$. 
\begin{enumerate}[(a)]
\item \label{item:thm.3.a}
If $\FX$ and $F_{X'}$ are continuous, then $C \order C'$ implies
\begin{equation} \label{eq:016}
\forall \alpha,\beta\in(0,1)
\quad
\covarab(Y|X) \le \covarab(Y'|X').
\end{equation}
\item \label{item:thm.3.b}
If $\FX$, $F_{X'}$, $\FY$, and $F_{Y'}$ are continuous, 
then~\eqref{eq:016} implies $C \order C'$.  
\end{enumerate}
\end{theorem}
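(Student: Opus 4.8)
The plan is to reduce both implications to the copula representation $\covarab(Y|X) = \FY\ginv(F_{V|U\ge\alpha}\ginv(\beta))$ supplied by Theorem~\ref{thm:1}(\ref{item:thm.1.b}), combined with the explicit formula $F_{V|U\ge\alpha}(v) = (v - C(\alpha,v))/(1-\alpha)$, and then to exploit the standard equivalence between pointwise ordering of distribution functions and the reverse ordering of their generalized inverses.

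For part~(\ref{item:thm.3.a}) I would first invoke continuity of $\FX$ and $F_{X'}$ to apply Theorem~\ref{thm:1}(\ref{item:thm.1.b}) to both vectors. Since $C\order C'$ means $C(\alpha,v)\le C'(\alpha,v)$ for all $\alpha,v$, and since $C$ enters the formula for $F_{V|U\ge\alpha}$ with a minus sign, this gives $F_{V|U\ge\alpha}(v) \ge F_{V'|U'\ge\alpha}(v)$ for every $v$. A pointwise-larger distribution function has a pointwise-smaller generalized inverse, because its super-level set $\{v : F_{V|U\ge\alpha}(v)\ge\beta\}$ is the larger one, so $F_{V|U\ge\alpha}\ginv(\beta) \le F_{V'|U'\ge\alpha}\ginv(\beta)$ for every $\beta$. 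Finally $\FY\ginv = F_{Y'}\ginv$ is non-decreasing, and applying it preserves the inequality, which is exactly~\eqref{eq:016}. This direction is a chain of monotonicity observations and should be routine.

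For part~(\ref{item:thm.3.b}) I would run the same chain in reverse. From~\eqref{eq:016}, Theorem~\ref{thm:1}(\ref{item:thm.1.b}) and $\FY = F_{Y'}$ I obtain $\FY\ginv(F_{V|U\ge\alpha}\ginv(\beta)) \le \FY\ginv(F_{V'|U'\ge\alpha}\ginv(\beta))$ for all $\alpha,\beta$. To strip off the outer $\FY\ginv$ I would use the standard fact (see~\cite{inverses}) that continuity of $\FY$ forces $\FY\ginv$ to be strictly increasing, hence injective; cancelling it yields $F_{V|U\ge\alpha}\ginv(\beta) \le F_{V'|U'\ge\alpha}\ginv(\beta)$ for all $\beta\in(0,1)$. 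Reversing the quantile-to-distribution-function step then gives $F_{V|U\ge\alpha}(v) \ge F_{V'|U'\ge\alpha}(v)$ for all $v$, and substituting the explicit formula returns $C(\alpha,v)\le C'(\alpha,v)$ for all $\alpha,v$, i.e.\ $C\order C'$; the boundary values of $u$ and $v$ are trivial since all copulas agree there.

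The main obstacle is the bookkeeping with generalized inverses in the converse, where the continuity hypotheses do the real work: continuity of $\FX$ and $F_{X'}$ is what makes Theorem~\ref{thm:1}(\ref{item:thm.1.b}) applicable, while continuity of $\FY$ is precisely what upgrades $\FY\ginv$ from non-decreasing to strictly increasing, so that it can be cancelled --- a flat piece of $\FY$ would otherwise collapse distinct quantile levels and break the implication. The reverse ``quantile ordering $\Rightarrow$ distribution-function ordering'' step I would handle through the Galois relation $F\ginv(\beta)\le v \iff \beta\le F(v)$: assuming $F_{V|U\ge\alpha}(v_0) < F_{V'|U'\ge\alpha}(v_0)$ for some $v_0\in(0,1)$, I would pick $\beta_0\in(0,1)$ strictly between these two values and derive $F_{V'|U'\ge\alpha}\ginv(\beta_0)\le v_0 < F_{V|U\ge\alpha}\ginv(\beta_0)$, contradicting the established quantile inequality. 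The only remaining care is for quantiles landing at $0$ or $1$, which the interior choice of $\beta_0$ and $v_0$ sidesteps.
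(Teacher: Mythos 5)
Your proposal is correct and follows essentially the same route as the paper's own proof: both reduce the claim via Theorem~\ref{thm:1}(\ref{item:thm.1.b}) to the ordering $F_{V|U\ge\alpha}\ge F_{V'|U'\ge\alpha}$, use the standard equivalence between pointwise ordering of distribution functions and reversed ordering of their generalized inverses, and for the converse invoke continuity of $\FY$ to make $\FY\ginv$ strictly increasing and hence cancellable. Your extra detail on the Galois relation merely fills in the ``well known'' equivalence that the paper cites without proof.
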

\begin{remark}
Note that Theorem~\ref{thm:3} does not need $\FX=F_{X'}$. The only assumption 
on the conditioning random variables $X$ and $X'$ is that they are 
continuously distributed.
\end{remark}
\begin{proof}
Part~(\ref{item:thm.3.a}). 
Let $(U,V)\sim C$ and $(U',V')\sim C'$. 
As $\FX\ginv$ and $\FY\ginv$ are non-decreasing, 
Theorem~\ref{thm:1}(\ref{item:thm.1.b}) reduces the problem to 
\begin{equation} \label{eq:017}
\forall \alpha,\beta\in(0,1)
\quad
F_{V|U \ge \alpha}\ginv (\beta) \le F_{V'|U' \ge \alpha}\ginv (\beta).
\end{equation}
Moreover, it is well known that for any distribution functions $G$ and $H$ the 
ordering $G\ginv(y) \le H \ginv (y)$ for all $y\in(0,1)$ is equivalent to 
$G(x) \ge H(x)$ for all $x\in\R$. Thus it suffices to show that 
\begin{equation*}
\forall \alpha,v \in(0,1)
\quad
F_{V|U\ge\alpha}(v) \ge F_{V'|U'\ge\alpha} (v).
\end{equation*}
The representation of $F_{V|U\alpha}(v)$ in Theorem~\ref{thm:1}(\ref{item:thm.1.b}) reduces this to $C(\alpha,v) \le C'(\alpha,v)$ for all $\alpha,v$, which is 
precisely $C \order C'$.
\par
Part~(\ref{item:thm.3.b}). Combining ~\eqref{eq:016} with Theorem~\ref{thm:1}, 
one obtains 
\begin{equation} \label{eq:018}
\forall \alpha,\beta 
\quad
\FY\ginv(F_{V|U\ge\alpha}\ginv (\beta)) \le \FY\ginv(F_{V'|U'\ge\alpha}\ginv (\beta)).
\end{equation}
As $\FY$ is continuous, $\FY\ginv$ is strictly increasing. 
Therefore~\eqref{eq:018} implies \eqref{eq:017}, 
which is equivalent to $C \order C'$.
\end{proof}
\par
Theorem~\ref{thm:3} can be applied to various stochastic models. 
We start with elliptical distributions. 
This model class includes such important examples 
as the multivariate Gaussian and the multivariate $t$ distributions. 
Since $\covarab(X|Y)$ 
considers two random variables and multivariate ellipticity implies 
bivariate ellipticity for all bivariate sub-vectors, 
we restrict the consideration to the bivariate case. 
\par
A bivariate random vector $(X_1,X_2)$ is \emph{elliptically distributed} if 
\[
(X,Y)\tr \disteq \mu\tr + R A W\tr
\]
where $\mu=(\muX,\muY)\in\R^2$ and $A\in\R^{2\times2}$ are constant, 
$W=(W_1,W_2)$ is uniformly distributed on the 
Euclidean unit sphere $\cubr{x\in\R^2: \|x\|_2 = 1}$, and $R$ is a non-negative 
random variable independent of $W$. 
If $\E R<\infty$, then $\muX=\E X$ and $\muY=\E Y$. 
The \emph{ellipticity matrix} $\Sigma:=A\tr A$ is unique except for 
a multiplicative factor. 
The covariance matrix of $(X,Y)$ is defined if and only if $\E R^2<\infty$, 
and this matrix is always equal to $c\Sigma$ for some constant $c>0$. 
Thus, rescaling $R$ and $A$, one can always achieve that 
\begin{equation}
\label{eq:020}
\Sigma = \left(\begin{array}{cc} \sigmaX^2 & \sigmaX\sigmaY\rho \\ \sigmaX\sigmaY\rho &\sigmaY^2 \end{array}\right)
\end{equation}
where, if defined, $\sigmaX=\variance(X)$, $\sigmaY=\variance(Y)$, and 
$\rho=\corr(X,Y)$. In the following we will always assume this standardization 
of $\Sigma$ and denote the bivariate elliptical distribution with 
location parameter $\mu=(\muX,\muY)$ and ellipticity matrix $\Sigma$ by
$\Ecal(\mu,\Sigma,R)$. 
\par
If $(X,Y) \sim\Ecal(\mu,\Sigma,R)$ with continuous marginal distributions, 
then the copula $C$ of $(X,Y)$ is uniquely determined. 
Copulas of this type are called \emph{elliptical copulas}. 
The invariance of copulas under increasing 
marginal transforms implies that $C$ depends only on the parameter $\rho$ of
$\Sigma$ and on the distribution of $R$. Thus $\rho$ is the natural dependence 
parameter for a bivariate elliptical copula $C$, 
whereas the distribution of $R$ specifies 
the type of the copula, such as Gaussian or $t$. 
We will call elliptical copulas $C$ and $C'$ of \emph{same type} if 
the corresponding elliptical distributions have identical radial 
parts $R\disteq R'$.
\par
The following theorem states monotonicity of $\covar$ with respect 
to the dependence parameter $\rho$ if $(X,Y)$ is elliptically distributed 
or has an elliptical copula. 
In particular, it applies to 
bivariate Gaussian or bivariate $t$ distributions, 
and also to bivariate distributions with Gaussian or $t$ copulas. 
\par
\begin{theorem}\label{thm:4}
\begin{enumerate}[(a)]
\item \label{item:thm.4.a}
Let $(X,Y)\sim\Ecal(\mu,\Sigma,R)$ 
and $(X',Y')\sim\Ecal(\mu',\Sigma',R)$ with 
continuous $\FX$ and $F_{X'}$. 
If $\muY\le\mu_{Y'}$ and $\sigmaY=\sigma_{Y'}$, 
then $\rho \le \rho'$ implies \eqref{eq:016}.  
\item \label{item:thm.4.c}
Let $(X,Y)\sim\Ecal(\mu,\Sigma,R)$ 
and $(X',Y')\sim\Ecal(\mu',\Sigma',R)$ with 
continuous $\FX$ and $F_{X'}$. 
If $\muY\le\mu_{Y'}$ and $\sigmaY\le\sigma_{Y'}$, then $\rho \le \rho'$ implies 
\[
\forall \alpha\in(0,1) \, \forall\beta\in[\beta_0,1)
\quad
\covarab(Y|X) \le \covarab(Y'|X')
\]
with $\beta_0:=\frac{1/2 - C(\alpha,1/2)}{1-\alpha}$ where $C$ is the copula 
of $(X,Y)$.
\item \label{item:thm.4.b}
Let $\FXY$ and $F_{X',Y'}$ have elliptical copulas of same type
with dependence parameters $\rho$ and $\rho'$, respectively. 
If $\FX$ and $F_{X'}$ are continuous and $\FY (y) \ge F_{Y'}(y)$ 
for all $y\in\R$, then $\rho \le \rho'$ implies~\eqref{eq:016}. 
\end{enumerate}
\end{theorem}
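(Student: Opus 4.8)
The plan is to reduce all three parts to one dependence fact---that elliptical copulas of the same type increase in concordance order with their parameter $\rho$---and then to combine this with a pointwise comparison of the $Y$-margins through the copula representation in Theorem~\ref{thm:1}(\ref{item:thm.1.b}). The key lemma I would record first is: if $C,C'$ are elliptical copulas of the same type (common radial part $R$) with parameters $\rho\le\rho'$, then $C\order C'$. This is the main obstacle, as it is the only genuinely dependence-theoretic input; the rest is bookkeeping with generalized inverses. For the Gaussian type it is Slepian's inequality, equivalently the identity that $\partial_\rho$ of the standardized bivariate normal distribution function equals its nonnegative density. For a general type I would exploit the shared-$R$ representation $(X,Y)\disteq(R\cos\Theta,\,R\cos(\Theta-\psi))$ with $\psi=\arccos\rho$ and $\Theta$ uniform, so that decreasing $\psi$ (increasing $\rho$) aligns the coordinates while leaving both margins unchanged; since supermodular order is stable under the common mixture over $R$, it then suffices to order the conditional laws, or, more directly, to invoke the known supermodular ordering of elliptical distributions with identical generator and ordered dispersion matrices (Remark~\ref{rem:2}(\ref{item:rem.2.b})).

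Granting the lemma, I would prove part~(\ref{item:thm.4.b}) first, as it is cleanest. Being elliptical of the same type, $\rho\le\rho'$ gives $C\order C'$, hence $F_{V|U\ge\alpha}(v)\ge F_{V'|U'\ge\alpha}(v)$ by the formula in Theorem~\ref{thm:1}(\ref{item:thm.1.b}), and therefore $q:=F_{V|U\ge\alpha}\ginv(\beta)\le F_{V'|U'\ge\alpha}\ginv(\beta)=:q'$. Combining $q\le q'$ with the hypothesis $\FY\ge F_{Y'}$ (equivalently $\FY\ginv\le F_{Y'}\ginv$ pointwise) and the monotonicity of $F_{Y'}\ginv$ yields $\covarab(Y|X)=\FY\ginv(q)\le F_{Y'}\ginv(q)\le F_{Y'}\ginv(q')=\covarab(Y'|X')$, which is~\eqref{eq:016}; this is precisely the argument of Theorem~\ref{thm:3}(\ref{item:thm.3.a}) relaxed from $\FY=F_{Y'}$ to $\FY\ge F_{Y'}$. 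Part~(\ref{item:thm.4.a}) is then an immediate corollary: for elliptical $Y$-margins with common generator, $\sigmaY=\sigma_{Y'}$ together with $\muY\le\mu_{Y'}$ makes $F_{Y'}$ a rightward location shift of $\FY$, so $\FY\ge F_{Y'}$ pointwise and part~(\ref{item:thm.4.b}) applies.

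The delicate case is part~(\ref{item:thm.4.c}), where $\sigmaY\le\sigma_{Y'}$ destroys the pointwise margin comparison on the lower half. Writing the common standardized generator as $g$, so that $\FY\ginv(p)=\muY+\sigmaY g\ginv(p)$ and $F_{Y'}\ginv(p)=\mu_{Y'}+\sigma_{Y'}g\ginv(p)$ with $g\ginv(1/2)=0$, I would expand $\covarab(Y'|X')-\covarab(Y|X)=(\mu_{Y'}-\muY)+\sigma_{Y'}g\ginv(q')-\sigmaY g\ginv(q)$ and regroup it, using $q\le q'$, as $(\mu_{Y'}-\muY)+\sigma_{Y'}\bigl(g\ginv(q')-g\ginv(q)\bigr)+(\sigma_{Y'}-\sigmaY)g\ginv(q)$. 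The first two summands are nonnegative, and the third is nonnegative exactly when $g\ginv(q)\ge 0$, i.e.\ when $q\ge 1/2$; by Theorem~\ref{thm:1}(\ref{item:thm.1.b}) this is equivalent to $\beta\ge F_{V|U\ge\alpha}(1/2)=\tfrac{1/2-C(\alpha,1/2)}{1-\alpha}=\beta_0$, which pins down the threshold exactly. The one routine point to handle with care throughout is the passage from the inequality $\beta\ge\beta_0$ to $q\ge 1/2$, for which I would rely on continuity and strict monotonicity of the elliptical conditional distribution function $F_{V|U\ge\alpha}$ at the median level.
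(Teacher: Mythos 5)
Your proposal is correct and follows essentially the same route as the paper: the key input in both is the concordance ordering of elliptical distributions/copulas in $\rho$ (Slepian's inequality in the Gaussian case, the Cambanis--Simons result in general), combined with the copula representation of Theorem~\ref{thm:1}(\ref{item:thm.1.b}) and monotonicity of generalized inverses, and your median-threshold argument for the $\sigmaY\le\sigma_{Y'}$ case is exactly the paper's reduction to $F_{V|U\ge\alpha}\ginv(\beta)\ge 1/2$. The only differences are organizational (you prove the copula-comparison part first and obtain part~(\ref{item:thm.4.a}) as a corollary, whereas the paper proves part~(\ref{item:thm.4.a}) via Theorem~\ref{thm:3} after a location reduction and the trivial case $\sigmaY=0$) and do not affect correctness.
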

\begin{remark}
\begin{enumerate}[(a)]
\item
The assumption $\FY \ge F_{Y'}$ obviously includes the case of identical 
margins $\FY=F_{Y'}$, which is the natural setting for studying the response of 
$\covar$ to dependence parameters. 
\item
It is easy to see that the lower bound $\beta_0$ in 
Theorem~\ref{thm:4}(\ref{item:thm.4.c}) is decreasing in $\rho$. 
In particular, one has $\beta_0\le 1/2$ for $\rho\ge 0$. 
This guarantees that $\covarab(Y|X) \le \covarab(Y'|X')$ for 
$\alpha,\beta\in[1/2,1)$, which is fully sufficient for 
assessing dependence between rare events. 
\end{enumerate}
\end{remark}
\begin{proof}[Proof of Theorem~\ref{thm:4}]
Part~(\ref{item:thm.4.a}). 
It is obvious that $\covarab(c+Y|X) = c+\covarab(Y|X)$. 
Hence, as $\muY\le\mu_{Y'}$, it suffices to consider $\muY=\mu_{Y'}$, 
so that we have $\FY=F_{Y'}$.  
Since the case $\sigmaY=0$ is trivial, 
we only need to consider $\sigmaY>0$.
\par
The continuity of $\FX$ yields $\sigmaX>0$, and as $(X,Y)$ is elliptically 
distributed, we have $Y\disteq \frac{\sigmaY}{\sigmaX}X$. 
Hence $\FY$ is continuous as well, and therefore the copulas $C$ and $C'$ 
of $(X,Y)$ and $(X',Y')$ are uniquely defined. 
\par
According to Theorem~\ref{thm:3}(\ref{item:thm.3.a}), it suffices 
to show that $\rho<\rho'$ implies $C \order C'$. 
This is equivalent to 
$\Ecal(0,0,\Gamma(\rho),R) \order \Ecal(0,0,\Gamma(\rho'),R)$ 
for $\rho \le \rho'$ 
and $\Gamma(\rho)=\left(\begin{array}{cc} 1 & \rho \\ \rho & 1 \end{array}\right)$.
This ordering result is proven in~\cite{Cambanis/Simons:1982}. 
In the bivariate Gaussian case it is also known as Slepian's inequality 
\citep[cf.][Theorem 5.1.7]{Tong:1990}. 
\par
Part~(\ref{item:thm.4.c}). 
Without loss of generality we can assume that $\muY=\mu_{Y'}$ and $\sigmaY>0$. 
Part~(\ref{item:thm.4.a}) gives us $C\order C'$ and hence~\eqref{eq:017}.
Moreover, $\sigmaY \le \sigma_{Y'}$ implies that 
$\FY\ginv(t) \le F_{Y'}\ginv(t)$ for $t\in[1/2,1)$. 
Hence, according to Theorem~\ref{thm:1}(\ref{item:thm.1.b}), it suffices 
to verify that $F_{V|U\ge\alpha}\ginv(\beta) \ge 1/2$. 
This inequality is equivalent to $\beta\ge F_{V|U\ge\alpha}(1/2) = \beta_0$. 
\par
Part~(\ref{item:thm.4.b}). According to Part~(\ref{item:thm.4.a}), we have 
$C\order C'$ and hence~\eqref{eq:017}.
Since $\FY(y)\ge F_{Y'}(y)$ for all $y\in\R$ is equivalent to 
$\FY(y)\ginv(t)\le F_{Y'}\ginv(t)$, Theorem~\ref{thm:1}(\ref{item:thm.1.b}) yields
\[
\covarab(Y|X) \le F_{Y'}\ginv(F_{V|U\ge\alpha}\ginv(\beta)) \le \covarab(Y'|X').
\]
\end{proof}
\par
A very popular copula model is the \emph{Gumbel copula}. 
In the bivariate case it is defined as  
\begin{equation} \label{eq:023}
C_\theta(u,v) = \exp\robrfl{-\robrfl{(-\log u )^\theta + (- \log v )^\theta}^{1/\theta}}. 
\end{equation}
The dependence parameter $\theta$ assumes values in $[1,\infty]$, whereas 
$\theta=1$ and $\theta=\infty$ refer to $C_1(u,v):=uv$ (independence copula) 
and $C_\infty(u,v):=\min(u,v)$ (comonotonicity copula).
As shown in~\cite{Hu/Wei:2002}, $\theta \le \theta'$ 
implies $C_\theta \ordersm C_{\theta'}$ and hence $C_\theta \order C_{\theta'}$ 
(cf.\ Remark~\ref{rem:2}(\ref{item:rem.2.b})). 
This immediately yields the following analogue 
of Theorem~\ref{thm:4}(\ref{item:thm.4.b}).
\begin{corollary} \label{cor:1}
Let $(X,Y)$ and $(X',Y')$ have Gumbel copulas with dependence parameters $\theta$ and $\theta'$, respectively. 
If $\FX$ and $F_{X'}$ are continuous and $\FY(y)\ge F_{Y'}(y)$ for all $y\in\R$, then $\theta \le \theta'$ implies~\eqref{eq:016}.  
\end{corollary}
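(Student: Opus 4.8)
The plan is to reduce Corollary~\ref{cor:1} to a concordance ordering between the two Gumbel copulas and then to copy the margin-comparison step from the proof of Theorem~\ref{thm:4}(\ref{item:thm.4.b}). The single nontrivial ingredient, the ordering $C_\theta \order C_{\theta'}$ for $\theta \le \theta'$, is already supplied externally: as recorded in the paragraph preceding the statement, Hu and Wei~\cite{Hu/Wei:2002} prove $C_\theta \ordersm C_{\theta'}$, and by Remark~\ref{rem:2}(\ref{item:rem.2.b}) supermodular ordering entails concordance ordering, so that $C_\theta \order C_{\theta'}$. With this in hand, the remainder is pure assembly of results already present in the excerpt.

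First I would fix $\theta \le \theta'$ and set $C = C_\theta$, $C' = C_{\theta'}$, giving $C \order C'$. I would then extract from the proof of Theorem~\ref{thm:3}(\ref{item:thm.3.a}) the purely copula-level implication that $C \order C'$ yields \eqref{eq:017}: the concordance inequality $C(\alpha,v) \le C'(\alpha,v)$, fed into the representation $F_{V|U\ge\alpha}(v) = \frac{v - C(\alpha,v)}{1-\alpha}$ of Theorem~\ref{thm:1}(\ref{item:thm.1.b}), gives $F_{V|U\ge\alpha}(v) \ge F_{V'|U'\ge\alpha}(v)$ for all $\alpha, v \in (0,1)$, which is equivalent to the quantile ordering \eqref{eq:017}. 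The point to stress here is that this implication never touches the $Y$-margins, so it remains valid even though the corollary assumes only $\FY \ge F_{Y'}$ rather than equality.

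It is precisely this weakening of the margin hypothesis that prevents a direct appeal to Theorem~\ref{thm:3}, so I would instead reproduce the two-step argument of Theorem~\ref{thm:4}(\ref{item:thm.4.b}). The assumption $\FY(y) \ge F_{Y'}(y)$ for all $y \in \R$ is equivalent to $\FY\ginv(t) \le F_{Y'}\ginv(t)$ for all $t$; combining this with the monotonicity of $F_{Y'}\ginv$ and the representation $\covarab(Y|X) = \FY\ginv(F_{V|U\ge\alpha}\ginv(\beta))$ of Theorem~\ref{thm:1}(\ref{item:thm.1.b}) produces
\[
\covarab(Y|X) = \FY\ginv(F_{V|U\ge\alpha}\ginv(\beta)) \le F_{Y'}\ginv(F_{V|U\ge\alpha}\ginv(\beta)) \le F_{Y'}\ginv(F_{V'|U'\ge\alpha}\ginv(\beta)) = \covarab(Y'|X'),
\]
where the first inequality uses $\FY\ginv \le F_{Y'}\ginv$ and the second uses \eqref{eq:017} together with the monotonicity of $F_{Y'}\ginv$. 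This is exactly \eqref{eq:016}, which proves the corollary.

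I do not expect a genuine obstacle: every step is either a citation or a monotonicity manipulation, which is why the preceding text can say the result follows immediately. The only claim worth double-checking is the one made in the second paragraph, namely that the implication $C \order C' \Rightarrow$ \eqref{eq:017} is truly margin-free; once that is confirmed, the weaker hypothesis $\FY \ge F_{Y'}$ causes no difficulty, and the Gumbel ordering of Hu and Wei closes the argument.
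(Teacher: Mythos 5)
Your proposal is correct and follows exactly the route the paper intends: the Hu--Wei supermodular ordering of Gumbel copulas combined with Remark~\ref{rem:2}(\ref{item:rem.2.b}) gives $C_\theta \order C_{\theta'}$, after which the argument is the same two-step chain as in the proof of Theorem~\ref{thm:4}(\ref{item:thm.4.b}) (margin-free deduction of \eqref{eq:017}, then the comparison $\FY\ginv \le F_{Y'}\ginv$). The paper leaves these details implicit by declaring the corollary immediate; your write-up just makes them explicit.
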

\begin{remark}
Corollary~\ref{cor:1} also holds for \emph{Galambos copulas} 
with dependence parameters $\theta \le \theta'$; 
see \cite{Hu/Wei:2002} for $C_\theta \ordersm C_{\theta'}$ in this case.
\end{remark}
The monotonicity of $\coesab(X,Y)$ with respect to dependence 
parameters follows from the integral representation~\eqref{eq:019}. 
\begin{corollary} \label{cor:2} 
Suppose that $\E |Y|$ and $\E |Y'|$ are finite. 
\begin{enumerate}[(a)]
\item \label{item:cor.2.a}
If $(X,Y)$ and $(X',Y')$ satisfy the assumptions of 
Theorem~\ref{thm:4}(\ref{item:thm.4.a}) or (\ref{item:thm.4.b}), or those of Corollary~\ref{cor:1}, then
\begin{equation} \label{eq:024}
\forall \alpha,\beta \in(0,1)
\quad
\coesab(Y|X) \le \coesab(Y'|X').
\end{equation}
\item \label{item:cor.2.b}
If $(X,Y)$ and $(X',Y')$ satisfy the assumptions of 
Theorem~\ref{thm:4}(\ref{item:thm.4.c}), then 
\[
\forall \alpha \in(0,1) \, \forall \beta \in[\beta_0,1)
\quad
\coesab(Y|X) \le \coesab(Y'|X').
\]
with $\beta_0=\frac{1/2 - C(\alpha,1/2)}{1-\alpha}$.
\end{enumerate}
\end{corollary}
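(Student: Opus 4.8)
The plan is to exploit the integral representation~\eqref{eq:019}, which expresses $\coesab(Y|X)$ as an average of $\covar_{\alpha,t}(Y|X)$ over $t\in[\beta,1)$, and thereby reduce each claimed $\coes$ inequality to the corresponding pointwise-in-$t$ $\covar$ inequality that is already supplied by Theorem~\ref{thm:4} and Corollary~\ref{cor:1}. Before integrating, I would first record that the hypotheses $\E|Y|<\infty$ and $\E|Y'|<\infty$ make these integrals finite and hence $\coes$ well defined: writing $Z\sim F_{Y|X\ge\var_\alpha(X)}$, one has $\E|Z| = \E[|Y|\mid X\ge\var_\alpha(X)] \le \E|Y|/(1-\alpha)<\infty$, so $\coesab(Y|X)=\es_\beta(Z)$ is finite, and the same bound applies to $(X',Y')$.

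For Part~(\ref{item:cor.2.a}), each of the cited results---Theorem~\ref{thm:4}(\ref{item:thm.4.a}), Theorem~\ref{thm:4}(\ref{item:thm.4.b}), and Corollary~\ref{cor:1}---delivers the inequality $\covarab(Y|X)\le\covarab(Y'|X')$ for \emph{all} confidence levels $\alpha,\beta\in(0,1)$. Fixing $\alpha$ and renaming the second confidence level as the integration variable $t$, I obtain $\covar_{\alpha,t}(Y|X)\le\covar_{\alpha,t}(Y'|X')$ for every $t\in(0,1)$. Since each integrand is integrable on $[\beta,1)$ by the finiteness step, monotonicity of the integral gives
\[
\frac{1}{1-\beta}\int_\beta^1\covar_{\alpha,t}(Y|X)\,\d t \le \frac{1}{1-\beta}\int_\beta^1\covar_{\alpha,t}(Y'|X')\,\d t,
\]
which is precisely~\eqref{eq:024}.

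For Part~(\ref{item:cor.2.b}), the only difference is that Theorem~\ref{thm:4}(\ref{item:thm.4.c}) supplies the pointwise inequality $\covar_{\alpha,t}(Y|X)\le\covar_{\alpha,t}(Y'|X')$ only for second-level confidence $t\ge\beta_0$, with $\beta_0=\frac{1/2-C(\alpha,1/2)}{1-\alpha}$. I would therefore restrict to $\beta\in[\beta_0,1)$, so that the whole integration range $[\beta,1)$ lies inside $[\beta_0,1)$; on this range the pointwise $\covar$ inequality holds for every $t$, and the identical monotone-integration step closes the argument.

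The proof is essentially mechanical, so I do not anticipate a genuine obstacle. The only points requiring care are the finiteness and well-definedness of the $\coes$ integrals, which are exactly what the hypotheses $\E|Y|,\E|Y'|<\infty$ secure, and the bookkeeping in Part~(\ref{item:cor.2.b}): the restriction $\beta\ge\beta_0$ is precisely what guarantees that the pointwise $\covar$ inequality from Theorem~\ref{thm:4}(\ref{item:thm.4.c}) is available throughout the interval of integration.
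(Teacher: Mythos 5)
Your proposal is correct and follows exactly the route the paper intends: the paper gives no separate proof beyond remarking that the result ``follows from the integral representation~\eqref{eq:019}'', and your argument---integrating the pointwise $\covar_{\alpha,t}$ inequality from Theorem~\ref{thm:4} or Corollary~\ref{cor:1} over $t\in[\beta,1)$, with the restriction $\beta\ge\beta_0$ in Part~(\ref{item:cor.2.b}) ensuring the pointwise bound holds on the whole integration range---is precisely that argument, made explicit. Your additional check that $\E|Y|<\infty$ yields integrability via $\E[|Y|\mid X\ge\var_\alpha(X)]\le\E|Y|/(1-\alpha)$ is a sensible piece of diligence the paper leaves implicit.
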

We conclude this section by relating the results obtained here 
to another systemic risk measure.
\begin{remark}
\begin{enumerate}[(a)]
\item
Corollary~\ref{cor:2}(\ref{item:cor.2.a}) also applies to the Marginal Expected Shortfall from \cite{Acharya/Pedersen/Philippon/Richardson:2010}.
Setting  $\beta=0$ in~\eqref{eq:024} and applying 
Remark~\ref{rem:3}(\ref{item:rem.3.e}), 
one obtains $\MES_\alpha(Y|X)\le\MES_\alpha(Y'|X')$ for all $\alpha$.
\item
In~\cite{Zhou:2010}, the Systemic Impact Index ($\mathrm{SII}$) of an institution $Y_i$ 
is defined as 
\begin{align*}
\mathrm{SII}_i(\alpha) 
&:= 
\E \robrfl{\sum_{j=1}^d 1\cubr{Y_j \ge \var_\alpha(Y_j)} \bigg| Y_i \ge \var_\alpha(Y_i)}\\
&=
1+ \sum_{j\ne i} \P(Y_j \ge \var_\alpha(Y_j) | Y_i\ge \var_\alpha(Y_i)). 
\end{align*}
It is easy to see that~\eqref{eq:016} is equivalent to 
\[
\P(Y>\var_\beta(Y)|X>\var_\beta(X)) \le \P(Y'>\var_\beta(Y')|X'>\var_\beta(X'))
\]
for all $\alpha,\beta$. 
Thus, for $Y= Y_j$ and $X=Y_i$, 
the assumptions of Theorems~\ref{thm:3}(\ref{item:thm.3.a}) 
and \ref{thm:4}
also imply dependence consistency of the single 
conditional default probabilities 
$\P(Y_j \ge \var_\alpha(Y_j) | Y_i\ge \var_\alpha(Y_i))$.
\end{enumerate}
\end{remark}
%
%
%

\section{Examples} \label{sec:examples}
In this section we compare $\covar$ and $\ABcovar$ in three different 
models: the bivariate Gaussian, the bivariate $t$, and the bivariate 
distribution with a Gumbel copula and $t$ margins. 
\subsection{The bivariate Gaussian distribution} \label{subsec:gaussian}
It is well known that the bivariate Gaussian distribution is elliptical.
Hence Theorem~\ref{thm:4}(\ref{item:thm.4.a}) 
guarantees that $\covar$ is an increasing function of the 
correlation parameter $\rho$. 
Moreover, $\ABcovar$ can be calculated explicitly in this case, so that 
it is particularly easy to compare $\covar$ with $\ABcovar$. 
\subsubsection*{Computation of $\ABcovar$}
Let $(X,Y)\sim\Ncal(\mu,\Sigma)$ with mean vector $\mu=(\muX,\muY)$ and 
covariance matrix $\Sigma$ as in~\eqref{eq:020}. 
As for all bivariate elliptical models, the dependence between $X$ and $Y$ is 
fully described by the correlation parameter $\rho$.
An appealing property of the bivariate normal distribution 
is the interpretation as a 
linear model. Indeed, $(X,Y)\sim\Ncal(\mu,\Sigma)$ is equivalent to 
\begin{align} \label{eq:002}
\frac{Y-\muY}{\sigmaY} = \rho \frac{X-\muX}{\sigmaX} + \sqrt{1-\rho^2} Z,
\end{align}
where $X\sim\Ncal(\muX,\sigma_X^2)$ and $Z\sim\Ncal(0,1)$, independent of $X$. 
\par
Due to $X\sim\Ncal(\muX,\sigmaX^2)$ 
we have $\var_\alpha(X) = \muX + \sigmaX \Phi\inv(\alpha)$, 
where $\Phi$ is the distribution function of 
$\Ncal(0,1)$. 
Substituting $X=\var_\alpha(X)$ in~\eqref{eq:002}, one obtains  
\[
Y=\muY + \sigmaY \robrfl{\rho\Phi\inv(\alpha) + \sqrt{1-\rho^2} Z}.
\]
This shows that 
$\Law(Y|X=\var_\alpha(X)) =  \Ncal(\tilde{\mu},\tilde{\sigma}^2)$ 
with $\tilde{\mu}=\muY + \sigmaY\rho\Phi\inv(\alpha)$ and 
$\tilde{\sigma}=\sigmaY\sqrt{1-\rho^2}$. 
Hence we obtain that 
\begin{align}
\ABcovarab (Y|X) 
&= \nonumber
\var_\beta (Y | X=\var_\alpha(X)) = \tilde{\mu} + \tilde{\sigma}\Phi\inv(\beta) \\
&= \label{eq:006}
\muY + \sigmaY \robrfl{  \rho \Phi\inv(\alpha) + \Phi\inv(\beta) \sqrt{1 - \rho^2} }. 
\end{align}
\subsubsection*{Computation of $\covar$}
To compute $\covar$, we use the copula representation from 
Theorem~\ref{thm:1}(\ref{item:thm.1.b}). 
From $Y\sim\Ncal(\muY,\sigma^2)$ one obtains that
$\FY\inv(v) = \muY + \sigmaY\Phi\inv(v)$ for $v\in(0,1)$. 
Moreover, the copula of $(X,Y)\sim\Ncal(\mu,\Sigma)$ 
is the Gauss copula $C_\rho$
with dependence parameter $\rho$. For $\rho=0$ it is the independence 
copula, $C_0(u,v)=uv$, and for $\rho\ne0$ it has the following representation:
\begin{align}
C_\rho(u,v)
&=\nonumber 
\FXY(\FX\ginv(u),\FY\ginv(v)) \\
&= \label{eq:003}
\int_{- \infty}^{\Phi^{-1}(v)} \int_{- \infty}^{\Phi^{-1}(u)} 
\frac{1}{2 \pi \sqrt{1 - \rho^2}} 
\exp \robrfl{\frac{-(s_1^2 - 2\rho s_1 s_2 + s_2^2)}{2(1-\rho^2)}}
\d s_2 \d s_1.
\end{align}
Applying Theorem~\ref{thm:1}(\ref{item:thm.1.b}), we obtain 
\[
\covarab(Y) = \muY + \sigmaY\Phi\inv(F_{V|U \ge \alpha}\inv(\beta))
\] 
where $F_{V|U \ge \alpha}(v) = \frac{v - C_\rho(\alpha,v)}{1-\alpha}$. 
The values of $\covar$ can be obtained by numerical integration 
of~\eqref{eq:003} and numerical inversion of the function $F_{V|U \ge \alpha}(v)$. 
\par
An alternative method to compute $\covar$ is the numerical computation 
and inversion of the function
\begin{align} \label{eq:004}
F_{Y|X\ge\var_\alpha(X)} (t) 
=
\frac{1}{1-\alpha} 
\int_{-\infty}^{t}\int_{\var_\alpha(X)}^\infty \fXY (x,y) \, \d x \,\d y,
\end{align}
where $\fXY$ is the joint density of $X$ and $Y$. 
Depending on the application, each method has its advantages. 
Whilst 
\eqref{eq:004} is more direct and hence faster for numerically 
tractable $\fXY$, 
the conditional copula values obtained in~\eqref{eq:003} 
can be re-used with other marginal distributions. 
\subsubsection*{Monotonicity in $\rho$}
As bivariate Gaussian distributions are elliptical, 
Theorem~\ref{thm:4}(\ref{item:thm.4.a}) guarantees that 
$\covar$ is always increasing in $\rho$. 
However, this is not the case for $\ABcovar$. 
Partial differentiation of~\eqref{eq:006} in $\rho$ yields 
\begin{align} \label{eq:005}
\partial_\rho \ABcovarab(Y|X) 
= 
\sigmaY \robrfl{ \Phi\inv (\alpha)  - \frac{\rho \Phi\inv (\beta) }{\sqrt{1 - \rho^2}} },
\end{align} 
which is positive if $\Phi\inv(\alpha) \sqrt{1-\rho^2} > \rho \Phi\inv(\beta)$ 
and negative if $\Phi\inv(\alpha) \sqrt{1-\rho^2} < \rho \Phi\inv(\beta)$. 
Besides the degenerate case 
$\alpha=\beta=1/2$ with constant $\ABcovarab$, 
there are 4 cases depending on the signs of $\Phi\inv(\alpha)$ 
and $\Phi\inv(\beta)$:
\begin{enumerate}[(i)]
\item 
If $\alpha \ge 1/2$ and $\beta \ge 1/2$, then $\ABcovarab(Y|X)$ is increasing in $\rho$ for $\rho < \rho_0:= \frac{|\Phi\inv(\alpha)|}{\sqrt{(\Phi\inv(\alpha))^2 + (\Phi\inv(\beta))^2}}$ and decreasing for $\rho> \rho_0$. 
\item 
If $\alpha\ge 1/2$ and $\beta<1/2$, then $\ABcovarab(Y|X)$ is increasing in $\rho$ for $\rho > - \rho_0$ and decreasing for $\rho < -\rho_0$.
\item 
If $\alpha<1/2$ and $\beta\ge 1/2$, then $\ABcovarab(Y|X)$ is increasing in $\rho$ for $\rho < - \rho_0$ and decreasing for $\rho > -\rho_0$. 
\item 
If $\alpha<1/2$ and $\beta<1/2$, then $\ABcovarab(Y|X)$ is increasing in $\rho$ for $\rho > \rho_0$ and decreasing for $\rho < \rho_0$. 
\end{enumerate}
Thus $\ABcovar$ is monotonic with respect to $\rho$ only in degenerate cases. 
In particular, in the most important case $\alpha,\beta\in(1/2, 1)$, $\ABcovar$ 
is decreasing for $\rho>\rho_0$, which means that $\ABcovar$ 
fails to detect dependence where it is most pronounced. 
In the special case $\alpha=\beta$, the critical threshold $\rho_0$ 
is always equal to $1/\sqrt{2}$. 
\par
\begin{figure} 
\centering
\includegraphics[width=0.8\textwidth]{NORMAL_covar_scovar_rho.pdf}
\caption{$\ABcovara (Y|X)$ and $\covara(Y|X)$ (i.e., with $\beta=\alpha$) in the bivariate normal model as functions of $\rho$.}
\label{fig:normal-covar-scovar-rho}
\end{figure}
\par
A graphic illustration to this fact is given in 
Figure~\ref{fig:normal-covar-scovar-rho}, showing $\ABcovara(Y|X)$ 
and $\covara(Y|X)$
for $\rho \ge 0.2$ and $\alpha=\beta$ assuming values $0.90,$ $0.95$, 
or $0.99$. 
The short writing $\ABcovara$ refers to $\ABcovaraa$; 
analogously, $\covara$ denotes $\covaraa$.  
This notation was used in the original definitions of 
$\ABcovar$ and $\covar$, which were restricted to $\alpha=\beta$ 
(cf.\ Remark~\ref{rem:3}(\ref{item:rem.3.b})). 
For the sake of simplicity we set $\muY=0$ and $\sigmaY=1$. 
These parameters 
have no influence 
on the decreasing or increasing behaviour 
of $\covar$ or $\ABcovar$ as functions of $\rho$.
\subsubsection*{Normalized values of $\covar$ and $\ABcovar$}
\begin{figure}
\centering
\includegraphics[width=0.8\textwidth]{NORMAL_ratios.pdf}
\caption{Bivariate normal model with $\mu_Y=0$: Ordering of the ratios $\ABcovara(Y|X)/\var_\alpha(Y)$ and $\covara(Y|X)/\var_\alpha(Y)$ for different $\alpha$.} 
\label{fig:ratios-normal}
\end{figure}
\par
The relative impact of a stress event for $X$ on the institution $Y$  
can be quantified by the ratio $\ABcovarab(Y|X)/\var_\alpha(Y)$ 
or by $\covarab(Y|X)/\var_\alpha(Y)$.
A similar of systemic risk indicator was proposed in~\cite{covar08}. 
Figure~\ref{fig:ratios-normal} shows these ratios for $\alpha=\beta$ 
and $\mu=0$ as functions of $\alpha$.  
The different line types in the plots correspond to 
$\rho=0.5$, $0.7$, and $0.92$. 
The ratios $\ABcovara(Y|X)/\var_\alpha(Y)$ are constant, 
which is also easy to see from~\eqref{eq:006}. 
The interesting part here is the ordering of the lines for 
different $\rho$. 
In case of $\ABcovar$, the line for $\rho=0.7$ is above the 
two others, illustrating that the inconsistency issue 
is common to all $\alpha\in(1/2,1)$. 
The plot of $\covara(Y|X)/\var_\alpha(Y)$ 
shows correct ordering for all $\alpha$, 
as guaranteed by Theorem~\ref{thm:4}(\ref{item:thm.4.a}). 
Another observation one can make here is that 
$\covara(Y|X)/\var_\alpha(Y)$ 
is decreasing in $\alpha$. 
This, however, is a model property 
that seems to be related to the light tail of the normal distribution. 
In heavy-tailed models considered in Sections~\ref{subsec:ellipt-t} and~\ref{subsec:gumbel-t} 
the ratio $\covara(Y|X)/\var_\alpha(Y)$ is increasing in $\alpha$. 
\subsubsection*{Backtesting and violation rates}
The results above show that 
$\covar$ reflects the dependence between $X$ 
and $Y$ much more consistently than $\ABcovar$. 
An intuitive and very general explanation to this fact is that 
conditioning on $X\ge\var_\alpha(X)$ 
corresponds to a reasonable \enquote{what if} question, 
whereas conditioning on $X=\var_\alpha(X)$ does not.
Indeed, the scenario $\cubr{X\ge\var_\alpha(X)}$ includes all 
possible outcomes for $X$ if $X$ is stressed, whereas the 
scenario $\cubr{X=\var_\alpha(X)}$ selects only the most benign 
of them. 
\par
In backtesting of $\var$ one expects that $X$ exceeds $\var_\alpha(X)$ with 
probability not larger than $1-\alpha$. 
Abbreviating \enquote{Conditional $\var$}, the term $\ABcovarab$ 
suggests that $Y$ exceeds $\ABcovarab(Y|X)$ with 
conditional probability $1-\beta$ or less, given that $X$ is stressed.
The definition of $\covar$ understands stress of $X$ as 
$\cubr{X\ge\var_\alpha(X)}$, so that the expected violation rate for $\covarab$ 
under this stress scenario is equal to $1-\beta$. 
In contrast to that, $\ABcovar$ is designed to have the violation rate 
$1-\beta$ under the less natural and more optimistic scenario 
$\cubr{X=\var_\alpha(X)}$.
As a consequence, the violation rates for $\ABcovarab$ 
backtesting experiments based on the natural stress scenario 
$\cubr{X\ge\var_\alpha(X)}$ are significantly higher than $1-\beta$. 
\par
This issue is illustrated in Table~\ref{table:normal-table}. 
The underlying Monte Carlo experiment generates 
an i.i.d. sample $(X_i,Y_i)\sim \Ncal(0,\Sigma)$ 
for $i=1,\ldots,n$ and counts the joint exceedances 
$\cubr{Y_i\ge\ABcovarab(Y|X),  X_i\ge \var_\alpha(X)}$. 
The $\ABcovar$ violation rate for the stress scenario 
$\cubr{X\ge\var_\alpha(X)}$ is the ratio of the joint 
excess count and the count of the excesses $\cubr{X_i\ge\var_\alpha(X)}$. 
The violation rate for $\covar$ is obtained analogously from the number 
of joint exceedances $\cubr{Y_i\ge\covarab(Y|X), X_i\ge \var_\alpha(X)}$.
We chose $n=10^7$ and $\alpha,\beta$ being either $0.95$ or $0.99$. 
\par
It is remarkable that 
the violation rate for $\ABcovar$
increases with $\rho$. 
This demonstrates that the underestimation of risk by $\ABcovar$ is most 
pronounced in case of strong dependence and, hence, high systemic risk. 
\par
\begin{table}[ht]
\centering 
\begin{tabular}{c c c c c c} 
\hline\hline 
Bound & $ \rho = 0$ & $ \rho = 0.2 $  & $ \rho = 0.5 $ & $ \rho = 0.7 $ & $ \rho = 0.9 $ \\ [0.5ex] 
\hline 
$\covar_{0.95, 0.95}^{=}(Y|X)$ & 0.0503 & 0.0601 & 0.0857 & 0.1229 & 0.2520 \\ 
$\covar_{0.99, 0.99}^{=}(Y|X)$ & 0.0099 & 0.0124 & 0.0189 & 0.0292 & 0.0875 \\ 
$\covar_{0.95, 0,99}^{=}(Y|X)$ & 0.0101 & 0.0130 & 0.0213 & 0.0375 & 0.1224 \\ 
$\covar_{0.99, 0.95}^{=}(Y|X)$ & 0.0500 & 0.0588 & 0.0785 & 0.1045 & 0.2053 \\ [0.9ex]
$\covar_{0.95, 0.95}(Y|X)$ & 0.0503 & 0.0500 & 0.0503 & 0.0495 & 0.0499 \\ 
$\covar_{0.99, 0.99}(Y|X)$ & 0.0099 & 0.0101 & 0.0104 & 0.0099 & 0.0098 \\ 
$\covar_{0.95, 0.99}(Y|X)$ & 0.0101 & 0.0102 & 0.0102 & 0.0099 & 0.0098 \\ 
$\covar_{0.99, 0.95}(Y|X)$ & 0.0500 & 0.0507 & 0.0509 & 0.0501 & 0.0491 \\ 
\hline 
\end{tabular}
\caption{Violation rates in the bivariate normal case. Monte Carlo backtesting with $n=10^{7}$ and 
$\alpha,\beta\in\cubr{0.95,0.99}$%
} 
\label{table:normal-table} 
\end{table}
\par
A graphical illustration to this issue is given in 
Figure~\ref{fig:simul-normal-1} by bivariate normal samples from 
the simulation study described above. 
The horizontal lines mark the levels of $\ABcovara(Y|X)$ 
and $\covara(Y|X)$, and $\var_\alpha(Y)$. The vertical lines mark 
$\var_\alpha(X)$. 
The joint excess counts are 
the  numbers of points above the corresponding horizontal line and 
on the right hand side from the vertical line marking $\var_\alpha(X)$. 
The sample size is $n=2000$,  
which suffices to demonstrate how correlation changes the shape of the 
sample cloud and thus increases the number of the 
joint excesses $\cubr{Y_i\ge\ABcovara(Y|X), X_i\ge\var_\alpha(X)}$. 
\par
\begin{figure} 
\centering
\includegraphics[width=\textwidth]{NORMAL_sample_cloud.pdf}
\caption{Bivariate normal samples (size $n=2000$)
and the joint excess regions in the backtesting experiment for $\alpha=\beta=0.95$.}
\label{fig:simul-normal-1}
\end{figure}
\subsubsection*{Risk contribution measures $\dcovar$ and $\dmcovar$}
\par
\begin{figure} 
\centering
\includegraphics[width=0.8\textwidth]{NORMAL_Dcovar_rho.pdf}
\caption{$\dcovara$ and $\dmcovara$ as functions of $\rho$ in the bivariate normal model.}
\label{fig:Normal-dcovar-rho}
\end{figure}
\par 
As mentioned in Section~\ref{sec:basics}, \cite{covar08} aims not at 
$\ABcovar$ itself, but at the difference between $\ABcovar$ and some 
characteristic of an unstressed state. The two most common definitions 
of such a risk contribution measure 
are $\dcovar$ and $\dmcovar$ (see \eqref{eq:021} and~\eqref{eq:022}). 
In the bivariate normal case one has 
$\var_\beta (Y) = \muY + \sigmaY\Phi\inv(\beta)$, so that~\eqref{eq:006} 
yields
\[
\dcovarab(Y) = \sigmaY \left( \Phi\inv\ ( \alpha) \rho + \Phi\inv (\beta) \left(  \sqrt{1-\rho^2} -1 \right) \right).
\]
For $\alpha = \beta$ this simplifies to  
$\dcovara(Y) = \sigmaY \Phi\inv(\alpha)\robrfl{\rho + \sqrt{1-\rho^2} -1}$.
Regardless of $\alpha$ and $\beta$, $\dcovar$ inherits 
the non-monotonicity in $\rho$ from $\ABcovar$. 
An illustration to this issue is given in 
Figure~\ref{fig:Normal-dcovar-rho}, which shows 
plots of $\dcovar$ and $\dmcovar$ as functions of $\rho$ for $\alpha=\beta$.   
\par
At a first glance, $\dmcovar$ seems to be an improvement because 
it is increasing in $\rho$. In fact, $\dmcovar$ is even linear here. 
Due to $\med(X)=\muX$, \eqref{eq:002} yields 
$F_{Y|X=\med(X)}\ginv(\beta) = \muY + \sigmaY\sqrt{1-\rho^2}\Phi\inv(\beta)$. 
Applying~\eqref{eq:006}, one obtains that 
\begin{align}
\mylefteqn \nonumber
\dmcovarab (Y)\\
&= \nonumber
\muY + \sigmaY \left( \Phi\inv (\alpha ) \rho + \Phi\inv (\beta ) \sqrt{ 1 - \rho^2} \right) 
- 
\robrfl{\muY + \sigmaY \Phi\inv(\beta)  \sqrt{1-\rho^2}}\\
&=  \label{eq:007}
\sigmaY \Phi\inv(\alpha)\rho.
\end{align}
Thus, in the bivariate normal model, $\dmcovarab (Y|X)$ is linear 
with positive slope that depends on $\rho$ and $\alpha$, but not on $\beta$. 
In view of the linear structure~\eqref{eq:002} of the bivariate Gaussian 
model, this even appears reasonable.  
However, examples in 
Sections~\ref{subsec:ellipt-t} and~\ref{subsec:gumbel-t} 
show that $\dmcovar$ is not a monotonic function of dependence 
parameters in other models. 
Thus the applicability of $\dmcovar$ is restricted to linear models of 
type~\eqref{eq:002}, 
where it is superfluous because it carries quite the same information as 
the correlation parameter $\rho$ 
or the linear regression 
parameter from the classical Capital Asset Pricing Model
(the so-called CAPM-$\beta$), which is equal to $\rho \sigmaY/\sigmaX$ in the 
present setting.
\subsubsection*{Extension from $\covar$ to $\coes$}
\begin{figure} 
\centering
\includegraphics[width=0.8\textwidth]{NORMAL_CoES_SCoES.pdf}
\caption{$\ABcoesa (Y|X)$ and $\coesa(Y|X)$ in the bivariate normal model as functions of $\rho$.}
\label{fig:normal-coes-scoes-rho}
\end{figure}
\par
Due to Corollary~\ref{cor:2}(\ref{item:cor.2.a}) we already know that 
$\coesab$ is increasing in $\rho$ for all $\alpha$ and $\beta$. 
The special case $\alpha=\beta$ is illustrated in 
Figure~\ref{fig:normal-coes-scoes-rho}, which shows that $\ABcoes$ is 
not increasing in $\rho$. 
Due to the light tail of the normal distribution, 
these plots are similar to those of $\covar$ and $\ABcovar$ in 
Figure~\ref{fig:normal-covar-scovar-rho}.  
A closer look at~\eqref{eq:019a} confirms that the non-monotonicity 
of $\ABcoes$ in $\rho$ is inherited from $\ABcovar$. 
Thus the best possible extension to Conditional Expected Shortfall 
based on $\ABcovar$ still fails to reflect dependence properly.
%
%
%
\subsection{Bivariate $t$ distribution}\label{subsec:ellipt-t}
The next example we consider is the bivariate $t$ distribution, which 
is elliptical, but heavy-tailed. The comparison follows the same scheme 
as in the previous section. 
A bivariate $t$ distributed random vector with $\nu>0$ degrees of freedom 
(bivariate $t(\nu)$) can be obtained as follows:
\[
(X,Y) := (\muX,\muY) + \sqrt{\frac{\nu}{W}}\robrfl{\Xtilde,\Ytilde}, 
\]
where $(\Xtilde,\Ytilde)\sim \Ncal(0,\Sigma)$ 
and $W\sim\chi^2(\nu)$, independent of $(\Xtilde,\Ytilde)$.
The parameters $\muX,\muY\in\R$ specify the location of $(X,Y)$. 
For simplicity, we consider a centred model with $\muX=\muY=0$. 
\par
It is well known that the bivariate $t$ distribution is elliptical 
with ellipticity matrix $\Sigma$. 
The corresponding sample clouds have an elliptical shape 
(cf.\ Figure~\ref{fig:simul-t}).
The second moments of $X$ and $Y$ are finite for $\nu>2$, 
and in this case  
the correlation between $X$ and $Y$ is equal to $\rho$. 
The role of $\rho$ is the same as for all elliptical models: 
larger values of $\rho$ increase association between 
large values of $X$ and $Y$.
Analytic expressions for $\ABcovar$ or $\covar$ are not feasible 
in this model, so that computations have to be carried 
out numerically. 
\subsubsection*{Monotonicity in $\rho$}
The behaviour of $\ABcovar$ and $\covar$ as functions of the correlation 
parameter $\rho$ is shown in 
Figure~\ref{fig:T-covar-scovar}. 
Similarly to the Gaussian case, $\covar$ is increasing in $\rho$ due 
to Theorem~\ref{thm:4}(\ref{item:thm.4.a}), whereas $\ABcovar$ is not.
Moreover, the relative distance between $\ABcovar$ and $\covar$ 
(as it could be quantified by the ratio $\covar/\ABcovar$) is larger than 
in the Gaussian case. A possible explanation to this effect could be 
the heavy tail of the $t(3)$ distribution.  
\par
\begin{figure} 
\centering
\includegraphics[width=0.8\textwidth]{STUDENT-T_covar_Scovar_rho.pdf}
\caption{Bivariate $t(3)$ distribution: $\ABcovara$ and $\covara$ as functions of the correlation parameter $\rho$.}%
\label{fig:T-covar-scovar}%
\end{figure}
\par
\subsubsection*{Normalized values of $\covar$ and $\ABcovar$}
Figure~\ref{fig:ratios-t} shows 
the ratios $\ABcovara(Y|X)/\var_\alpha(Y)$ 
and $\covara(Y|X)/ \var_\alpha(X)$ as functions of $\alpha$ for  
selected values of $\rho$. 
This comparison is analogous to 
Figure~\ref{fig:ratios-normal} in the Gaussian case. 
Similarly to 
the Gaussian case, the ordering of $\ABcovara/\var_\alpha$ with respect to the 
dependence parameter $\rho$ or $\theta$ is inconsistent, whereas 
the ratios $\covara/ \var_\alpha$ are ordered correctly for all 
$\alpha$: the line for the largest $\rho$ is entirely above the line for 
the second largest $\rho$, etc. 
In contrast to the Gaussian case, these ratios are increasing in $\alpha$. 
This could be explained by the heavy tail of the $t(3)$ distribution or 
by the positive tail dependence in the bivariate $t$ model. 
\par
\begin{figure}
\centering
\includegraphics[width=0.8\textwidth]{STUDENT-T_ratios.pdf}
\caption{Bivariate $t(3)$ distribution with $\mu_Y=0$: Ordering of the ratios $\ABcovara(Y|X)/\var_\alpha(Y)$ and $\covara(Y|X)/\var_\alpha(Y)$ for different $\alpha$.}
\label{fig:ratios-t}
\end{figure}
\par
\subsubsection*{Backtesting and violation rates}
The backtesting study was implemented analogously to  
the bivariate Gaussian example. 
The results are shown in Table~\ref{table:t-table}, 
and they go in line with those from the Gaussian case. 
While $\covar$ -- again, by construction -- has a violation rate 
close to $1-\beta$,  
the violation rates of $\ABcovar$ are significantly higher 
and increase in $\rho$. 
Going up to $36\%$ for $\rho=0.9$, the violation rates for 
$\covar$ are even higher than in the Gaussian model. 
\par
\begin{table}
\centering
\begin{tabular}{c c c c c c} 
\hline\hline
  \  & $\rho=0$ & $\rho=0.2$ & $\rho=0.5$ & $\rho=0.7$ & $\rho=0.9$ \\[0.5ex]
\hline
$\covar^{=}_{0.95, 0.95}(Y|X)$ & 0.1017 & 0.1213 & 0.1659 & 0.2202 & 0.3638 \\
$\covar^{=}_{0.99, 0.99}(Y|X)$ & 0.0358 & 0.0433 & 0.0643 & 0.0939 & 0.1909 \\
$\covar^{=}_{0.95, 0.99}(Y|X)$ & 0.0341 & 0.0429 & 0.0640 & 0.0944 & 0.1954 \\
$\covar^{=}_{0.99, 0.95}(Y|X)$ & 0.1036 & 0.1229 & 0.1658 & 0.2184 & 0.3546 \\[0.9ex]
$\covar_{0.95, 0.95}(Y|X)$ & 0.0497 & 0.0500 & 0.0499 & 0.0506 & 0.0504 \\
$\covar_{0.99, 0.99}(Y|X)$ & 0.0103 & 0.0099 & 0.0104 & 0.0105 & 0.0103 \\
$\covar_{0.95, 0.99}(Y|X)$ & 0.0100 & 0.0099 & 0.0100 & 0.0102 & 0.0101 \\
$\covar_{0.99, 0.95}(Y|X)$ & 0.0501 & 0.0493 & 0.0499 & 0.0508 & 0.0507 \\
\hline
\end{tabular}
\caption{Violation rates in the bivariate $t(3)$ case. Monte Carlo backtesting with $n=10^{7}$ and $\alpha,\beta\in\cubr{0.95,0.99}$.%
} 
\label{table:t-table}
\end{table}
\par
The corresponding sample plots with lines marking $\var_\alpha(X)$, 
$\ABcovara(Y|X)$, and $\covara(Y|X)$  
are shown in Figure~\ref{fig:simul-t}.
Similarly to Figure~\ref{fig:simul-normal-1}, these graphics demonstrate 
how increasing dependence parameter $\rho$ changes the shape 
of the corresponding sample clouds and increases the numbers of joint 
excesses. 
\par
\begin{figure}
\centering
\includegraphics[width=\textwidth]{STUDENT-T_sample_cloud.pdf}
\caption{Bivariate $t(3)$ samples (size $n=2000$)
and the joint excess regions in the backtesting experiment for $\alpha=\beta=0.95$.}
\label{fig:simul-t}
\end{figure}
\subsubsection*{Risk contribution measures $\dcovar$ and $\dmcovar$} 
The comparison of $\dcovar$ and $\dmcovar$ is shown in 
Figure~\ref{fig:T-dcovar}. 
The graphics demonstrate clearly how these $\ABcovar$ 
based risk contribution 
measures inherit the inconsistency of $\ABcovar$. 
Both $\dcovar$ and $\dmcovar$ fail to be increasing with respect 
to the dependence parameter $\rho$, and the shapes 
of the corresponding curves are similar to those of $\ABcovar$ in 
Figure~\ref{fig:T-covar-scovar}.  
Although $\dmcovar$ is slightly better behaved than $\dcovar$, 
it is still strongly inconsistent with respect to $\rho$. 
In particular, this example demonstrates that the monotonicity of 
$\dmcovar$ with respect to $\rho$ in the Gaussian case 
is a special property of the bivariate Gaussian model, 
so that the advantage of $\dmcovar$ over $\dcovar$ is quite limited. 
\par
\begin{figure}
\centering
\includegraphics[width=0.8\textwidth]{STUDENT-T_Dcovar_rho.pdf}
\caption{$\dcovara$ and $\dmcovara$ as functions of $\rho$ in the bivariate $t(3)$ model.}
\label{fig:T-dcovar}
\end{figure}
\par
\subsubsection*{Extension from $\covar$ to $\coes$}
\par
\begin{figure}
\centering
\includegraphics[width=0.8\textwidth]{STUDENT-T_CoES_SCoES.pdf}
\caption{$\ABcoesa (Y|X)$ and $\coesa(Y|X)$ in the bivariate $t(3)$ model as functions of $\rho$.}
\label{fig:T-coes}
\end{figure}
\par
The comparison of $\coes$ vs.\ $\ABcoes$ is shown in Figure~\ref{fig:T-coes}.  
The monotonicity or non-monotonicity in $\rho$ is again inherited from $\covar$ 
or $\ABcovar$. See also Corollary~\ref{cor:2}(\ref{item:cor.2.a}).
\subsection{Gumbel copula with $t$ margins}\label{subsec:gumbel-t}
The last model we consider here is obtained by endowing a 
bivariate Gumbel copula (cf.~\eqref{eq:023}) with $t$ margins. 
Thus it has the same heavy-tailed margins as the previous example, 
but a different dependence structure. 
An illustration of the sample clouds generated from this 
distribution is given in Figure~\ref{fig:simul-gumbel-1}.
\par
On the qualitative level, all comparison results obtained in this case are 
similar to the bivariate $t$ model, so that a brief overview is fully 
sufficient: 
\begin{itemize}
\item 
Corollary~\ref{cor:1} guarantees that $\covarab$ is increasing 
with respect to the dependence parameter $\theta$, whereas 
$\ABcovarab$ fails to be increasing if dependence is at its largest 
(see Figure~\ref{fig:gumbel-t-covar-scovar} for the case $\alpha=\beta$).
The strongest decay of $\ABcovar$ takes place for $\theta\in(1.5,2)$ 
and slows down for $\theta>2$. On the other hand, $\covara$ is almost 
constant for $\theta>2$. 
It seems that for $\theta>2$ the joint distribution of large values 
of $(X,Y)$ is almost comonotonic, 
so that there is no much change after $\theta$ exceeds $2$.
\item
The ratios $\covara(Y|X)/\var_\alpha(Y)$ are ordered correctly 
with respect to $\theta$, whereas the ratios 
$\ABcovara(Y|X)/\var_\alpha(Y)$ are not 
(see Figure~\ref{fig:ratios-gumbel}).
\item
The violation rates for $\ABcovarab$ in a simulated backtesting study 
are significantly larger than $1-\beta$, going up to $40\%$ 
for $\alpha=\beta=0.95$ and $\theta=3$ (cf.\ Table~\ref{table:gumbel-table} 
and Figure~\ref{fig:simul-gumbel-1}). 
This is even more than in the bivariate $t$ case. 
\item
Both $\dcovar$ and $\dmcovar$ fail to be increasing in $\theta$ 
(Figure~\ref{fig:gumbel-dcovar}).
\item Again, $\coes$ is increasing in $\theta$ and $\ABcoes$ is not; 
see Corollary~\ref{cor:2}(\ref{item:cor.2.a}) and Figure~\ref{fig:gumbel-t-coes}.
\end{itemize}
\par
\begin{figure}
\centering
\includegraphics[width=0.8\textwidth]{GUMBEL_covar_Scovar_theta.pdf}
\caption{Gumbel copula $t(3)$ margins: $\ABcovara (Y|X)$ and $\covara(Y|X)$ as functions of $\theta$.}%
\label{fig:gumbel-t-covar-scovar}%
\end{figure}
\par
\begin{figure}
\centering
\includegraphics[width=0.8\textwidth]{GUMBEL_ratios.pdf}
\caption{Gumbel copula with $t(3)$ margins: Ordering of the ratios $\ABcovara(Y|X)/\var_\alpha(Y)$ and $\covara(Y|X)/\var_\alpha(Y)$ for different $\alpha$.}
\label{fig:ratios-gumbel}
\end{figure}
\par
\begin{table}[ht]
\centering 
\begin{tabular}{c c c c c c c} 
\hline\hline
\  & $ \theta = 1$ & $ \theta = 1.1 $  & $ \theta = 1.2 $ & $ \theta = 1.5 $ & $ \theta = 2 $ & $ \theta = 3 $ \\ [0.5ex] 
\hline 
$\covar^{=}_{0.95, 0.95}(Y|X)$ & 0.0498 & 0.0982 & 0.1282 & 0.1911 & 0.2771 & 0.4090 \\ 
$\covar^{=}_{0.99, 0.99}(Y|X)$ & 0.0101 & 0.0346 & 0.0461 & 0.0752 & 0.1321 & 0.2423 \\
$\covar^{=}_{0.95, 0.99}(Y|X)$ & 0.0098 & 0.0309 & 0.0434 & 0.0754 & 0.1319 & 0.2450 \\
$\covar^{=}_{0.99, 0.95}(Y|X)$ & 0.0500 & 0.1050 & 0.1335 & 0.1916 & 0.2745 & 0.4043 \\[0.9ex]
$\covar_{0.95, 0.95}(Y|X)$ & 0.0498 & 0.0494 & 0.0503 & 0.0498 & 0.0501 & 0.0502 \\
$\covar_{0.99, 0.99}(Y|X)$ & 0.0101 & 0.0099 & 0.0101 & 0.0102 & 0.0100 & 0.0097 \\
$\covar_{0.95, 0.99}(Y|X)$ & 0.0098 & 0.0099 & 0.0100 & 0.0099 & 0.0100 & 0.0098 \\
$\covar_{0.99, 0.95}(Y|X)$ & 0.0500 & 0.0497 & 0.0499 & 0.0492 & 0.0503 & 0.0492 \\
\hline 
\end{tabular}
\caption{Violation rates for the Gumbel copula with $t(3)$ margins: Monte Carlo backtesting with $n=10^{7}$ 
and $\alpha,\beta\in\cubr{0.95,0.99}$.%
}
\label{table:gumbel-table} 
\end{table}
\par
\begin{figure}
\centering
\includegraphics[width=\textwidth]{GUMBEL_sample_cloud.pdf}
\caption{Gumbel copula with $t(3)$ margins: simulated samples (size $n=2000$)
and the joint excess regions in the backtesting experiment for $\alpha=\beta=0.95$.}%
\label{fig:simul-gumbel-1}
\end{figure}
\par
\begin{figure}
\centering
\includegraphics[width=0.8\textwidth]{GUMBEL_Dcovar_theta.pdf}
\caption{Gumbel copula with $t(3)$ margins: $\dcovara$ and $\dmcovara$ as functions of $\theta$.}
\label{fig:gumbel-dcovar}
\end{figure}
\par
\begin{figure}
\centering
\includegraphics[width=0.8\textwidth]{GUMBEL_CoES_SCoES.pdf}
\caption{Gumbel copula with $t(3)$ margins: $\ABcoesa (Y|X)$ and $\coesa(Y|X)$ as functions of $\theta$.}
\label{fig:gumbel-t-coes}%
\end{figure}
\par
\FloatBarrier 
%
%
%

\section{Conclusions} \label{sec:conclusions}
The present paper demonstrates that the alternative 
definition of Conditional Value-at-Risk
proposed in~\cite{Girardi/Erguen:2011,Klyman:2011}  (here $\covar$) 
gives a much more consistent 
response to dependence than the original definition
used in~\cite{covar08, covar09, covar11}  (here $\ABcovar$).
\par
The general results in Section~\ref{sec:general} show that the 
monotonicity of $\covarab(Y|X)$ with respect to dependence parameters is 
related to the concordance ordering of bivariate distributions or copulas.
This gives the notion of $\covar$ based on the stress scenario 
$\cubr{X\ge\var_\alpha(X)}$ a solid mathematical fundament. 
On the other hand, comparative studies in Section~\ref{sec:examples} 
show that conditioning on $\cubr{X=\var_\alpha(X)}$ makes $\ABcovar$ 
and its derivatives unable to detect systemic risk where it is most 
pronounced. 
Related counterexamples include several popular models, 
in particular the very basic bivariate normal case.  
\par
Based on these results, we claim that, if Conditional Value-at-Risk 
of an institution (or system) $Y$ 
related to a stress scenario for another institution 
$X$ should enter financial regulation, 
then it should use conditioning on $\cubr{X\ge\var_\alpha(X)}$. 
This kind of stress scenario has a much more meaningful practical  
interpretation than the highly selective and over-optimistic 
scenario $\cubr{X=\var_\alpha(X)}$.  
Conditioning on $\cubr{X\ge\var_\alpha(X)}$ also makes $\covar$ 
more similar to the systemic risk 
measures proposed in \cite{Goodhart/Segoviano:2008,Acharya/Pedersen/Philippon/Richardson:2010,Zhou:2010,Huang/Zhou/Zhu:2011}. 
\par
The question how to define risk contribution measures based on stress 
events to the financial system is currently open. Besides $\covar$, 
$\coes$ with proper conditioning may also be an option. 
The advantage of $\coes$ over $\covar$ is its coherency. 
In the case $\var$ vs. $\es$, this point has gained new interest from 
the regulators \cite{bcbs219, Gauthier/Lehar/Souissi:2012}. 
\par
In some sense, $\ABcovar$ repeats two times the design error
that is responsible for the non-coherency of $\var$.
In the first step, it follows the $\var$ paradigm and thus 
favours a single conditional quantile of $Y$ over an average of such quantiles.
In the second step, it favours the most benign outcome of $X$ 
in a state of stress over considering the full range of possible 
values in this case. 
Financial regulation based on $\ABcovar$
has a strong potential to introduce additional instability, to set 
wrong incentives, and to create opportunities for regulatory arbitrage. 
\par
Another argument supporting $\coes$ is that it is particularly suitable 
for stress testing. In a system with several factors $X_1,\ldots,X_d$, 
the numbers 
$\coes_{\alpha_i, \beta}(Y|X_i)$ 
describe the influence of the different $X_i$ on $Y$. 
Assigning relative weights $w_i$ to the scenarios 
$X_i\ge\var_{\alpha_i}(X_i)$ and taking the weighted sum 
\begin{align} \label{eq:009}
\sum_{i=1}^d w_i \coes_{\alpha_i, \beta}(Y|X_i),
\end{align}
one always obtains a sub-additive risk measure. 
If the weights $w_i$ sum up to $1$, the resulting risk measure 
is coherent in the sense of \cite{Artzner/Delbaen/Eber/Heath:1999}.
The choice of the weights $w_i$ or of the confidence levels $\alpha_i$ 
may change over time, 
incorporating the newest information about the health of the institutions 
$X_1,\ldots,X_d$. 
\par
To make the weighted risk measure~\eqref{eq:009} even more meaningful, 
one could modify it by implementing not only the single risk 
factor excesses $X_i\ge\var_{\alpha_i}(X_i)$, but also the joint ones. 
Consistent choice of the corresponding weights can be derived by 
methods presented in \cite{Rebonato:2010}.  
A detailed discussion of this goes beyond the scope of 
the present paper and would also require additional mathematical research. 
\par
Motivated by the recent financial crisis and the following discussions 
on appropriate reforms in financial regulation, systemic risk measurement 
has become a vivid topic in economics and econometrics. 
Our results show that some important contributions are 
also to be made in related mathematical fields, 
including probability and statistics.
In particular, the dependence consistency or, say, dependence coherency of 
systemic risk indicators is a novel problem area that needs further 
study. The present paper provides first examples and counter-examples 
for compatibility of systemic risk indicators with the concordance 
order. 
%
The questions for general characterizations or representations of 
functionals with this property 
are currently open. 
\par 
In addition to dependence consistency, implementation of 
systemic risk measures in practice obviously needs  
estimation methods. The estimation of $\covar$ in GARCH models is discussed 
in~\cite{Girardi/Erguen:2011}. 
As non-parametric estimation of rare events would needs a lot of data, 
methods from Extreme Value Theory may be used to extrapolate the 
rear events from a larger number of data points. 
A similar approach for conditional default probabilities is 
pursued in~\cite{Zhou:2010}. 
%
%
%
%

\section*{Acknowledgements}
The authors would like to thank Paul Embrechts for several fruitful discussions 
related to this paper. Georg Mainik thanks RiskLab, ETH Zürich, for financial 
support. 
%
%
%
%
%


\bibliography{references}

\end{document}